\title{Interleaved sequences of geometric sequences binarized with Legendre symbol of two types}
\author{Kazuyoshi Tsuchiya \thanks{The author is with the Koden Electronics Co., Ltd., Ota-ku, 146-0095 Japan} \and
Yasuyuki Nogami \thanks{The author is with the Graduate School of Natural Science and Technology, Okayama University,
Okayama-shi, 700-8530 Japan}  \and
Satoshi Uehara \thanks{The author is with the Faculty of Environmental Engineering, the University of Kitakyushu,
Kitakyushu-shi, 808-0135 Japan}
\date{}}
\newtheorem{theorem}{Theorem}
\newtheorem{proposition}[theorem]{Proposition}
\newtheorem{lemma}[theorem]{Lemma}
\newtheorem{corollary}[theorem]{Corollary}
\newtheorem{example}[theorem]{Example}
\newtheorem{remark}[theorem]{Remark}
\begin{document}
\maketitle
\begin{abstract}
A pseudorandom number generator is widely used in cryptography.
A cryptographic pseudorandom number generator is required to generate pseudorandom numbers
which have good statistical properties as well as unpredictability.
An m-sequence is a linear feedback shift register sequence with maximal period over a finite field.
M-sequences have good statistical properties, however we must nonlinearize m-sequences for cryptographic purposes.
A geometric sequence is a binary sequence given by applying a nonlinear feedforward function to an m-sequence.
Nogami, Tada and Uehara proposed a geometric sequence
whose nonlinear feedforward function is given by the Legendre symbol.
They showed the geometric sequences have good properties for
the period, periodic autocorrelation and linear complexity.
However, the geometric sequences do not have the balance property.
In this paper, we introduce geometric sequences of two types and
show some properties of interleaved sequences of the geometric sequences of two types.
These interleaved sequences have the balance property
and double the period of the geometric sequences by the interleaved structure.
Moreover, we show correlation properties and linear complexity of the interleaved sequences.
A key of our observation is that the second type geometric sequence is the complement of
the left shift of the first type geometric sequence by half-period positions.
\end{abstract}

\section{Introduction}

In cryptography, a pseudorandom number generator is used to generate a private key, a public key,
a session key, a keystream and so on.
A cryptographic pseudorandom number generator is required to generate pseudorandom numbers
which have good statistical properties as well as unpredictability.

A linear feedback shift register (LFSR) \cite{Golomb,Goresky-Klapper} is a fast pseudorandom number generator
which generates well-distributed sequences with long period.
In particular, an m-sequence is an LFSR sequence with maximal period over a finite field.
However, we must nonlinearize m-sequences for cryptographic purposes.
A geometric sequence \cite{Chan-Games} is a binary sequence
given by applying a nonlinear feedforward function to an m-sequence.
As geometric sequences,
GMW sequences \cite{Gordon-Mills-Welch,Scholtz-Welch},
cascaded GMW sequences \cite{Klapper-Chan-Goresky,Gong1996}
and so on have been well known.

Nogami, Tada and Uehara \cite{Nogami-Tada-Uehara:IEICE} proposed a geometric sequence
whose nonlinear feedforward function is given by the Legendre symbol.
They showed the geometric sequences have good properties for period, periodic autocorrelation and linear complexity.
Moreover, the studies on generalization and extension of the sequences
\cite{Ino-Nogami-Uehara,Koike-Nogami-Tsuchiya-Uehara,Arshad-Nogami-Ogawa-Ino-Uehara-Morelos-Zaragoza-Tsuchiya,Ogawa-Arshad-Nogami-Uehara-Tsuchiya-Morelos-Zaragoza,Nogami-Uehara-Tsuchiya-Begum-Ino-Morelos-Zaragoza}
have made progress in recent years.
On the other hand, the sequences do not have the balance property.
It may become an obstruction for the fast implementation
proposed by Nogami, Tada and Uehara \cite{Nogami-Tada-Uehara:ISITA}.

The purpose of this paper is to construct sequences from the geometric sequences, which have the balance property.
Since the geometric sequences are described by the Legendre symbol,
the geometric sequences have a property similar to that of Legendre sequences \cite{Zierler}.
Legendre sequences of two types were defined and
properties of the interleaved sequences of Legendre sequences of two types were studied
by Tang and Gong \cite{Tang-Gong} and some other researchers.
In this paper, we introduce geometric sequences of two types and
show properties of interleaved sequences of the geometric sequences of two types.
More concretely, we show the period, correlation properties and linear complexity, and
show that the sequences have the balance property.
A key of our observation is that the second type geometric sequence is the complement of
the left shift of the first type geometric sequence by half-period positions.

This paper is organized as follows:
In Sect.\,\ref{section:Preliminaries},
we recall the definition of periodic autocorrelation, periodic cross-correlation and linear complexity.
In Sect.\,\ref{section:Geometric sequences binarized with Legendre symbol},
we introduce the geometric sequences of two types and show some properties of the sequences.
In Sect.\,\ref{section:Interleaved sequences},
we propose interleaved sequences of the geometric sequences of two types and
show the period, correlation properties and linear complexity of the interleaved sequences.
Finally, we describe some conclusions in Sect.\ref{section:Conclusion}.

We give some notations.
For a prime power $q$, $\mathbb{F}_q$ denotes the finite field with $q$ elements.
For a polynomial $f ( x )$ over a field $\mathbb{F}$, $\mathrm{deg} \, f$ denotes the degree of $f ( x )$.
For a prime number $p$ and an integer $a$, $(a / p)$ denotes the Legendre symbol.
For a finite field $\mathbb{F}_p$ and the extension field $\mathbb{F}_{p^m}$ of degree $m$ of $\mathbb{F}_p$,
$\mathrm{Tr}_{\mathbb{F}_{p^m} / \mathbb{F}_p } : \mathbb{F}_{p^m} \rightarrow \mathbb{F}_p$ denotes the trace map,
namely,
$\mathrm{Tr}_{\mathbb{F}_{p^m} / \mathbb{F}_p} ( \alpha ) = \alpha + \alpha^p + \cdots + \alpha^{p^{m - 1}}
\in \mathbb{F}_p$ for $\alpha \in \mathbb{F}_{p^m}$.
For a finite set $A$, $\# A$ denotes the number of elements in $A$.
For an integer $n > 1$ and an integer $a$, $a ~\underline{\mathrm{mod}}~ n$ denotes
the integer $u \in \{ 1, 2, \dots ,n \}$ such that $u \equiv a \mathrm{~mod~} n$. 

\section{Preliminaries}
\label{section:Preliminaries}

In this section, we recall the definition of periodic autocorrelation, periodic cross-correlation and linear complexity.
For details, see \cite[10.3, 10.4]{Mullen-Panario}.

\subsection{Periodic autocorrelation and periodic cross-correlation}

Let $S^{(1)} = ( s^{(1)}_n )_{n \geq 0}$ and $S^{(2)} = ( s^{(2)}_n )_{n \geq 0}$
be $N$-periodic sequences over $\mathbb{F}_2$.
For $0 \leq \tau \leq N - 1, \tau \in \mathbb{Z}$,
$R_{S^{(1)}, S^{(2)}} ( \tau )$ is defined as
\begin{equation*}
R_{S^{(1)}, S^{(2)}} ( \tau ) =
\sum_{i = 0}^{N - 1} (-1)^{s^{(1)}_i + s^{(2)}_{i + \tau}}.
\end{equation*}
Note that $s^{(1)}_i + s^{(2)}_{i + \tau}$ means an addition in $\mathbb{F}_2$.
If $S^{(1)} \neq S^{(2)}$,
then $R_{S^{(1)}, S^{(2)}} ( \tau )$ is called
the periodic cross-correlation of $S^{(1)}$ and $S^{(2)}$.
If $S := S^{(1)} = S^{(2)}$,
then $R_{S} ( \tau ) := R_{S, S} ( \tau )$ is called
the periodic autocorrelation of $S$.
By the definition,  $R_{S} ( 0 ) = N$.

\subsection{Linear complexity}

Let $q$ be a prime power.
Let $S = ( s_n )_{n \geq 0}$ be a sequence over $\mathbb{F}_q$.
The linear complexity $L(S)$ of $S$ is the length $L$ of the shortest linear recurrence relation
\begin{equation*}
s_{n + L} = a_{L - 1} s_{n + L - 1} + \cdots + a_{0} s_{n}, \quad n \geq 0
\end{equation*}
for some $a_0, \dots ,a_{L - 1} \in \mathbb{F}_q$,
and the minimal polynomial $m(x)$ of $S$ is the polynomial
$x^L - a_{L - 1} x^{L - 1} - \cdots - a_0  \in \mathbb{F}_q [x]$. 

Now, assume that $S$ is an $N$-periodic sequence.
$S(x)$ denotes the polynomial $s_0 + s_1 x + \cdots + s_{N - 1} x^{N - 1} \in \mathbb{F}_q [x]$.
Then $m(x) = (x^N - 1) / \mathrm{GCD} \, (x^N - 1, S (x))$ and
$L(S) = N - \mathrm{deg} \, \mathrm{GCD} \ (x^N - 1, S (x))$.

\section{Geometric sequences binarized with Legendre symbol}
\label{section:Geometric sequences binarized with Legendre symbol}

Nogami, Tada and Uehara \cite{Nogami-Tada-Uehara:IEICE} proposed a geometric sequence
whose nonlinear feedforward function is given by the Legendre symbol.
In this section, we survey properties of the geometric sequences
and define the geometric sequences of two types as the Legendre sequences (See \cite[III, B]{Tang-Gong}).
Let $p > 2$ be a prime number and $m > 1$ an integer. 
Let $\omega \in \mathbb{F}_{p^m}$ be a primitive element in $\mathbb{F}_{p^m}$.

\subsection{The geometric sequences of the first type}

The geometric sequence $T^{(1)} = ( t^{(1)}_n )_{n \geq 0}$ of the first type is defined as
\begin{equation}
t^{(1)}_n =
\left\{
\begin{array}{ll}
0 & \mbox{if~} \left( \frac{\mathrm{Tr}_{\mathbb{F}_{p^m} / \mathbb{F}_p}( \omega^n )}{p} \right) = 1 \\
1 & \mbox{if~} \left( \frac{\mathrm{Tr}_{\mathbb{F}_{p^m} / \mathbb{F}_p}( \omega^n )}{p} \right) = - 1 \\
0 & \mbox{if~} \left( \frac{\mathrm{Tr}_{\mathbb{F}_{p^m} / \mathbb{F}_p}( \omega^n )}{p} \right) = 0
\end{array}
\right.
, \quad n \geq 0.
\label{equation:1st type NTU}
\end{equation}

$T^{(1)}$ is a periodic sequence of period $N := 2 (p^m - 1) / (p - 1)$.
Note that $N$ is an even integer.
For $0 \leq \tau \leq N - 1, \tau \in \mathbb{Z}$, the autocorrelation of $T^{(1)}$ is
\begin{equation}
R_{T^{(1)}} (\tau) =
\left\{
\begin{array}{ll}
N = \frac{2 (p^m - 1)}{p - 1} & \mbox{if~} \tau = 0 \\
N_1 := - 2 p^{m - 1} + \frac{2 (p^{m - 1} - 1)}{p - 1} & \mbox{if~} \tau = \frac{N}{2} \\
N_2 := \frac{2 (p^{m - 2} - 1)}{p - 1} & \mbox{otherwise}.
\end{array}
\right.
\label{equation:AC of NTU}
\end{equation}
The linear complexity of $T^{(1)}$ is $L( T^{(1)} ) = N = 2 (p^m - 1) / (p - 1)$ (See \cite{Chan-Games} for arbitrary $m$).
Hence $L( T^{(1)} )$ attains a maximal value.
On the other hand, $T^{(1)}$ does not have the balance property.
In fact,
the number of zeros in one period of $T^{(1)}$ is $p^{m - 1} + 2 (p^{m - 1} - 1) / (p - 1)$, and
the number of ones in one period of $T^{(1)}$ is $p^{m - 1}$.

\subsection{The geometric sequences of the second type}

We define a geometric sequence $T^{(2)} = ( t^{(2)}_n )_{n \geq 0}$ of the second type as
\begin{equation}
t^{(2)}_n =
\left\{
\begin{array}{ll}
0 & \mbox{if~} \left( \frac{\mathrm{Tr}_{\mathbb{F}_{p^m} / \mathbb{F}_p}( \omega^n )}{p} \right) = 1 \\
1 & \mbox{if~} \left( \frac{\mathrm{Tr}_{\mathbb{F}_{p^m} / \mathbb{F}_p}( \omega^n )}{p} \right) = - 1 \\
1 & \mbox{if~} \left( \frac{\mathrm{Tr}_{\mathbb{F}_{p^m} / \mathbb{F}_p}( \omega^n )}{p} \right) = 0
\end{array}
\right.
, \quad n \geq 0.
\label{equation:2nd type NTU}
\end{equation}

\begin{lemma}
For any $n \geq 0 \in \mathbb{Z}$, $t^{(2)}_n = t^{(1)}_{n + N / 2} + 1 \in \mathbb{F}_2.$
\label{lemma:relation between type 1 and type 2}
\end{lemma}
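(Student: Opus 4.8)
The plan is to show that advancing the exponent by $N/2$ multiplies the trace by a fixed quadratic non-residue of $\mathbb{F}_p$, which negates the Legendre symbol; the lemma then follows by inspecting the three defining cases of \eqref{equation:1st type NTU} and \eqref{equation:2nd type NTU}. Throughout I would abbreviate $\mathrm{Tr} := \mathrm{Tr}_{\mathbb{F}_{p^m}/\mathbb{F}_p}$ and set $g := \omega^{N/2} = \omega^{(p^m - 1)/(p - 1)}$.

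First I would pin down the element $g$. Since $(p - 1)\cdot \tfrac{N}{2} = p^m - 1$ and $\omega$ has order $p^m - 1$, we get $g^{p - 1} = \omega^{p^m - 1} = 1$, so $g \in \mathbb{F}_p^{*}$. Moreover $\tfrac{N}{2}$ divides $p^m - 1$, hence $\gcd(\tfrac{N}{2}, p^m - 1) = \tfrac{N}{2}$ and the multiplicative order of $g$ equals $(p^m - 1)/\tfrac{N}{2} = p - 1$. Thus $g$ is a primitive root modulo $p$; as a primitive root it satisfies $g^{(p-1)/2} \equiv -1 \pmod{p}$, so $\left( \frac{g}{p} \right) = -1$ and $g$ is a quadratic non-residue.

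Next, using that $g \in \mathbb{F}_p$ together with the $\mathbb{F}_p$-linearity of the trace, for every $n \geq 0$ one has
\[
\mathrm{Tr}(\omega^{n + N/2}) = \mathrm{Tr}(g\,\omega^n) = g\,\mathrm{Tr}(\omega^n),
\]
whence, by multiplicativity of the Legendre symbol,
\[
\left( \frac{\mathrm{Tr}(\omega^{n + N/2})}{p} \right)
= \left( \frac{g}{p} \right)\left( \frac{\mathrm{Tr}(\omega^n)}{p} \right)
= - \left( \frac{\mathrm{Tr}(\omega^n)}{p} \right).
\]
So the symbol at index $n + N/2$ is the negation of the symbol at index $n$; in particular the value $0$ is preserved, which disposes of the case $\mathrm{Tr}(\omega^n) = 0$ automatically.

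Finally I would close with a three-line case check against the definitions. If $\left( \frac{\mathrm{Tr}(\omega^n)}{p} \right) = 1$, then $t^{(2)}_n = 0$ while the symbol at $n + N/2$ is $-1$, so $t^{(1)}_{n + N/2} = 1$ and $t^{(1)}_{n + N/2} + 1 = 0$; if it equals $-1$, then $t^{(2)}_n = 1$ and $t^{(1)}_{n + N/2} = 0$, so $t^{(1)}_{n + N/2} + 1 = 1$; if it equals $0$, then $t^{(2)}_n = 1$ and $t^{(1)}_{n + N/2} = 0$, so again $t^{(1)}_{n + N/2} + 1 = 1$. In every case $t^{(2)}_n = t^{(1)}_{n + N/2} + 1$ in $\mathbb{F}_2$, as claimed. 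The only step carrying genuine content is the identification of $g = \omega^{N/2}$ as a quadratic non-residue of $\mathbb{F}_p$; once that is in hand, everything after it is routine bookkeeping.
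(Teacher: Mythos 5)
Your proof is correct and follows essentially the same route as the paper: both hinge on the observation that $\omega^{N/2}$ is a primitive element of $\mathbb{F}_p$ (hence a quadratic non-residue), so that shifting the index by $N/2$ negates the Legendre symbol of the trace, after which the claim is a case check. You merely supply the order computation justifying that $\omega^{N/2}$ is a primitive element, which the paper asserts without proof.
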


\begin{proof}
Since $\omega^{N / 2} \in \mathbb{F}_p$ is a primitive element in $\mathbb{F}_p$,
\begin{equation*}
\left( \frac{\mathrm{Tr}_{\mathbb{F}_{p^m} / \mathbb{F}_p}( \omega^{n + N / 2} )}{p} \right) =
\left( \frac{\omega^{N / 2} \cdot \mathrm{Tr}_{\mathbb{F}_{p^m} / \mathbb{F}_p}( \omega^{n} )}{p} \right)
= - \left( \frac{\mathrm{Tr}_{\mathbb{F}_{p^m} / \mathbb{F}_p}( \omega^{n} )}{p} \right).
\end{equation*}
Hence $t^{(1)}_{n} = t^{(1)}_{n + N / 2} + 1$
for any $n \geq 0$ such that $\mathrm{Tr}_{\mathbb{F}_{p^m} / \mathbb{F}_p}( \omega^{n} ) \neq 0$.
Noting that
$\mathrm{Tr}_{\mathbb{F}_{p^m} / \mathbb{F}_p}( \omega^{n + N / 2} ) = 0$ if and only if
$\mathrm{Tr}_{\mathbb{F}_{p^m} / \mathbb{F}_p}( \omega^{n} ) = 0$,
it follows that
\begin{eqnarray*}
t^{(2)}_{n} & = &
\left\{
\begin{array}{ll}
t^{(1)}_{n} & \mbox{if~} \mathrm{Tr}_{\mathbb{F}_{p^m} / \mathbb{F}_p}( \omega^{n} ) \neq 0 \\
t^{(1)}_{n} + 1 & \mbox{if~} \mathrm{Tr}_{\mathbb{F}_{p^m} / \mathbb{F}_p}( \omega^{n} ) = 0 
\end{array}
\right. \\
& = & t^{(1)}_{n + N / 2} + 1.
\end{eqnarray*}
\end{proof}

\begin{corollary}
The period, periodic autocorrelation and linear complexity of $T^{(2)}$
are the same as those of $T^{(1)}$, respectively.
\label{corollary:relation between type 1 and type 2}
\end{corollary}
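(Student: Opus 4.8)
The plan is to lean entirely on Lemma~\ref{lemma:relation between type 1 and type 2}, which exhibits $T^{(2)}$ as the complement of the left shift of $T^{(1)}$ by $N/2$, and to check that each of the three invariants survives both a cyclic shift and complementation. Write $U = ( u_n )_{n \geq 0}$ with $u_n = t^{(1)}_{n + N / 2}$ for the shifted sequence, so that $t^{(2)}_n = u_n + 1$ in $\mathbb{F}_2$. The period is immediate: a cyclic shift of an $N$-periodic sequence is again $N$-periodic, and complementing does not change the period, so $T^{(2)}$ has period $N$. For the autocorrelation, substituting $t^{(2)}_i = t^{(1)}_{i + N / 2} + 1$ and $t^{(2)}_{i + \tau} = t^{(1)}_{i + \tau + N / 2} + 1$ into the defining sum makes the two added $1$'s cancel in the exponent of $(-1)$, giving $(-1)^{t^{(2)}_i + t^{(2)}_{i + \tau}} = (-1)^{t^{(1)}_{i + N / 2} + t^{(1)}_{i + \tau + N / 2}}$. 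Reindexing by $j = i + N / 2$ and using the $N$-periodicity of $T^{(1)}$ collapses the sum to $\sum_{j = 0}^{N - 1} (-1)^{t^{(1)}_j + t^{(1)}_{j + \tau}}$, i.e. $R_{T^{(2)}} ( \tau ) = R_{T^{(1)}} ( \tau )$ for every $\tau$.

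For the linear complexity I would use the formula $L(S) = N - \mathrm{deg} \, \mathrm{GCD} ( x^N - 1, S(x) )$ recalled in Sect.~\ref{section:Preliminaries} and track the polynomial $T^{(2)}(x)$. The shift merely multiplies $T^{(1)}(x)$ by the unit $x^{-N/2}$ modulo $x^N - 1$, so $\mathrm{GCD} ( x^N - 1, U(x) ) = \mathrm{GCD} ( x^N - 1, T^{(1)}(x) )$. For the complement, note $T^{(2)}(x) = U(x) + h(x)$ with $h(x) = 1 + x + \cdots + x^{N - 1} = ( x^N - 1 ) / ( x - 1 )$. Since $N$ is even, every irreducible factor of $x^N - 1$ divides $h(x)$: the factor $x - 1$ occurs in $x^N - 1$ with multiplicity at least $2$ (write $N = 2^a m$ with $m$ odd and $a \geq 1$, so $x^N - 1 = ( x^m - 1 )^{2^a}$), while any other monic irreducible factor is coprime to $x - 1$ and hence divides $h(x)$. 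Because $L( T^{(1)} ) = N$ forces $\mathrm{GCD} ( x^N - 1, T^{(1)}(x) ) = 1$, no irreducible factor of $x^N - 1$ divides $U(x)$; as each such factor divides $h(x)$, none divides $U(x) + h(x) = T^{(2)}(x)$. Therefore $\mathrm{GCD} ( x^N - 1, T^{(2)}(x) ) = 1$ and $L( T^{(2)} ) = N = L( T^{(1)} )$.

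The main obstacle is precisely this linear-complexity step: while the shift is harmless, adding the all-ones polynomial $h(x)$ could in principle reduce the degree of the greatest common divisor with $x^N - 1$ and thereby change the linear complexity. What rescues the argument is the parity of $N$, which guarantees that $h(x)$ is divisible by every irreducible factor of $x^N - 1$, so that complementation leaves the gcd, and hence the linear complexity, untouched. I expect the period and autocorrelation claims to be routine, and the bookkeeping with $h(x)$ and the even-period hypothesis to be the only genuinely delicate point.
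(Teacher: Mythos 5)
Your proof is correct, and it takes the route the paper intends: the paper states this corollary without proof as an immediate consequence of Lemma~\ref{lemma:relation between type 1 and type 2}, and your argument simply fills in the details. In particular, your handling of the linear complexity via $T^{(2)}(x) = x^{N/2}T^{(1)}(x) + (x^N-1)/(x-1)$, the fact that $\mathrm{GCD}(x^N-1,T^{(1)}(x))=1$ because $L(T^{(1)})=N$, and the evenness of $N$ is exactly the same bookkeeping the paper itself performs later in the proof of its linear-complexity theorem for $S^e$, so there is nothing to object to.
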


\begin{proposition}
For $0 \leq \tau \leq N - 1, \tau \in \mathbb{Z}$,
$R_{T^{(1)}, T^{(2)}} (\tau) = - R_{T^{(1)}} \left( \tau + N / 2 \right)$.
\label{proposition:periodic cross-correlation of type 1 and type 2}
\end{proposition}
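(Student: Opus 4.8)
The plan is to reduce the cross-correlation $R_{T^{(1)}, T^{(2)}}(\tau)$ directly to an autocorrelation of $T^{(1)}$ by substituting the identity from Lemma~\ref{lemma:relation between type 1 and type 2}. First I would write out the definition
\[
R_{T^{(1)}, T^{(2)}}(\tau) = \sum_{i=0}^{N-1} (-1)^{t^{(1)}_i + t^{(2)}_{i+\tau}},
\]
and then replace the second term in the exponent using $t^{(2)}_{i+\tau} = t^{(1)}_{i+\tau+N/2} + 1$, which is valid for every index since the lemma holds for all $n \geq 0$. After this substitution the exponent reads $t^{(1)}_i + t^{(1)}_{i+\tau+N/2} + 1$, with the addition taken in $\mathbb{F}_2$.

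The key observation is that adding $1 \in \mathbb{F}_2$ inside the exponent of $(-1)$ simply flips the sign: for any $a \in \mathbb{F}_2$ one has $(-1)^{a+1} = -(-1)^a$ (checking the two cases $a=0$ and $a=1$). Pulling this global factor of $-1$ out of the sum gives
\[
R_{T^{(1)}, T^{(2)}}(\tau) = -\sum_{i=0}^{N-1} (-1)^{t^{(1)}_i + t^{(1)}_{i+\tau+N/2}},
\]
and the remaining sum is precisely the defining expression for the periodic autocorrelation $R_{T^{(1)}}(\tau + N/2)$, which yields the claim.

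The only point requiring care is that the shift $\tau + N/2$ may exceed $N-1$ when $\tau \geq N/2$, placing it outside the range $0 \leq \tau' \leq N - 1$ for which $R_{T^{(1)}}$ was originally defined. This is harmless: since $T^{(1)}$ is $N$-periodic, the defining sum $\sum_{i=0}^{N-1}(-1)^{t^{(1)}_i + t^{(1)}_{i+\tau'}}$ depends on $\tau'$ only through its residue modulo $N$, so $R_{T^{(1)}}$ extends to an $N$-periodic function of its argument and $R_{T^{(1)}}(\tau + N/2)$ is unambiguous. I expect no genuine obstacle here; the whole argument is essentially a one-line substitution, provided the sign bookkeeping for the $\mathbb{F}_2$-addition and the periodic reindexing of the shift are made explicit.
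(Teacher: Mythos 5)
Your proposal is correct and follows essentially the same route as the paper: both substitute $t^{(2)}_{i+\tau} = t^{(1)}_{i+\tau+N/2}+1$ from Lemma~\ref{lemma:relation between type 1 and type 2} into the defining sum and extract the resulting global sign, the only cosmetic difference being that you use $(-1)^{a+1} = -(-1)^a$ directly while the paper phrases the sign flip through counting agreeing and disagreeing index sets. No gap; your explicit remark about interpreting the shift $\tau + N/2$ modulo $N$ is a point the paper leaves implicit.
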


\begin{proof}
By Lemma \ref{lemma:relation between type 1 and type 2},
\begin{eqnarray*}
R_{T^{(1)}, T^{(2)}} (\tau)
& = & \sum_{i = 0}^{N - 1} (-1)^{t^{(1)}_i + t^{(2)}_{i + \tau}}
= \sum_{i = 0}^{N - 1} (-1)^{t^{(1)}_i + t^{(1)}_{i + \tau + N / 2} + 1} \\
& = & \# \left\{ i \mid t^{(1)}_i = t^{(1)}_{i + \tau + N / 2} + 1, \, 0 \leq i \leq N - 1 \right\} \\
& & - \# \left\{ i \mid t^{(1)}_i \neq t^{(1)}_{i + \tau + N / 2} + 1, \, 0 \leq i \leq N - 1 \right\}  \\
& = & \# \left\{ i \mid t^{(1)}_i \neq t^{(1)}_{i + \tau + N / 2}, \, 0 \leq i \leq N - 1 \right\} \\
& & - \# \left\{ i \mid t^{(1)}_i = t^{(1)}_{i + \tau + N / 2}, \, 0 \leq i \leq N - 1 \right\} \\
& = & - R_{T^{(1)}} \left( \tau + N / 2 \right).
\end{eqnarray*}
\end{proof}

\section{Interleaved sequences}
\label{section:Interleaved sequences}

In order to construct a sequence that has the balance property, we introduce an interleaved sequence.
Gong \cite{Gong1995} introduced the concept of interleaved sequences that were obtained by merging sequences, and
sequence interleaving was widely used to improve the balance property of existing sequences
such as in \cite{Tang-Ding,Gu-Hwang-Han-Kim-Jin,Gong2002,Edemskiy,Chung-Yang,Zeng-Zeng-Zhang-Xuan,Yu-Gong}.

In this section, we propose interleaved sequences of geometric sequences of two types,
which were introduced in the previous section.
From Lemma \ref{lemma:relation between type 1 and type 2},
the interleaved sequence has the balance property by the interleaved structure.
Furthermore, we show correlation properties and linear complexity of the interleaved sequences.

\subsection{Interleaved sequences and left cyclic shift sequences}

For a family $\mathbb{S} = \{ S^{(i)} = ( s^{(i)}_{n} )_{ n \geq 0 } \mid 0 \leq i \leq T - 1 \}$
of $N$-periodic sequences, the sequence $U = ( u_n )_{ n \geq 0 }$ defined as
\begin{equation*}
u_n = s^{(i)}_{j} \mbox{~if~} n = j \times T + i, \, 0 \leq i \leq T - 1, \, 0 \leq j
\end{equation*}
is called an interleaved sequence of $\mathbb{S}$.

For an $N$-periodic sequence $S = ( s_{n} )_{ n \geq 0 }$ and $e \in \{ 0, 1, \dots , N - 1 \}$,
the sequence $L^{e} (S) = ( L^e (s)_{n} )_{ n \geq 0 }$ defined as
\begin{equation*}
L^e (s)_{n} = s_{n + e}, \quad n \geq 0
\end{equation*}
is called a left cyclic shift sequence.

\subsection{Interleaved sequences of the geometric sequences of two types}

Let $p > 2$ be a prime number and $m > 1$ an integer. 
Let $\omega \in \mathbb{F}_{p^m}$ be a primitive element in $\mathbb{F}_{p^m}$.
Let $T^{(1)}$ and $T^{(2)}$ be the sequences defined as (\ref{equation:1st type NTU}) and (\ref{equation:2nd type NTU})
with respect to $p, m$ and $\omega$, respectively.
Then $N := 2 (p^m - 1) / (p - 1)$ is the period of $T^{(1)}$ and $T^{(2)}$.

For $e \in \{ 0, 1, \dots , N - 1 \}$,
we define a sequence $S^{e} = ( s^{e}_{n} )_{ n \geq 0 }$
as the interleaved sequence of $\{ T^{(1)}, L^{e} (T^{(2)}) \}$.

\begin{example}
Let $p = 5$ and $m = 2$, so that $N = 12$.
Let $\alpha$ be a root of the irreducible polynomial $x^2 + 2 x + 3$, and
$\omega = 4 \alpha \in \mathbb{F}_p ( \alpha )$.
Then $T^{(1)}$ and $T^{(2)}$ are given as
\begin{eqnarray*}
T^{(1)} & = & ( 1, 1, 1, 0, 0, 1, 0, 0, 0, 0, 1, 0, \dots ), \\
T^{(2)} & = & ( 1, 1, 1, 1, 0, 1, 0, 0, 0, 1, 1, 0, \dots ),
\end{eqnarray*}
respectively.
Put $e = 4$.
Then $S^{e}$ is given as
\begin{equation*}
S^{4} = ( 1, 0, 1, 1, 1, 0, 0, 0, 0, 0, 1, 1, 0, 1, 0, 0, 0, 1, 0, 1, 1, 1, 0, 1, \dots ).
\end{equation*}
\end{example}

\begin{example}
Let $p = 3$ and $m = 3$, so that $N = 26$.
Let $\alpha$ be a root of the irreducible polynomial $x^3 + 2 x^2 + 1$, and
$\omega = 2 \alpha^2 \in \mathbb{F}_p ( \alpha )$.
Then $T^{(1)}$ and $T^{(2)}$ are given as
\begin{eqnarray*}
T^{(1)} & = & ( 0, 1, 0, 1, 1, 0, 0, 0, 1, 1, 1, 0, 1, 0, 0, 0, 0, 0, 0, 0, 1, 0, 0, 0, 1, 0, \dots ), \\
T^{(2)} & = & ( 1, 1, 1, 1, 1, 1, 1, 0, 1, 1, 1, 0, 1, 1, 0, 1, 0, 0, 1, 1, 1, 0, 0, 0, 1, 0, \dots ),
\end{eqnarray*}
respectively.
Put $e = 17$.
Then $S^{e}$ is given as
\begin{eqnarray*}
S^{17} & = & ( 0, 0, 1, 1, 0, 1, 1, 1, 1, 0, 0, 0, 0, 0, 0, 1, 1, 0, 1, 1, 1, 1, 0, 1, 1, 1, \\
& & 0, 1, 0, 1, 0, 1, 0, 0, 0, 1, 0, 1, 0, 1, 1, 0, 0, 1, 0, 1, 0, 0, 1, 1, 0, 0, \dots ).
\end{eqnarray*}
\end{example}

\subsection{Periodic autocorrelation and periodic cross-correlation of the proposed sequences}

We show a lemma to give the periodic autocorrelation and periodic cross-correlation of the proposed sequences.

\begin{lemma}
If $\tau = 2 \tau_0, \, 0 \leq \tau_0 \leq N - 1, \tau_0 \in \mathbb{Z}$, then
\begin{equation*}
R_{ S^{e_1}, S^{e_2} } ( \tau ) = R_{ T^{(1)} } ( \tau_0 ) + R_{ T^{(1)} } ( e_2 - e_1 + \tau_0 ).
\end{equation*}
If $\tau = 2 \tau_0 + 1, \, 0 \leq \tau_0 \leq N - 1, \tau_0 \in \mathbb{Z}$, then
\begin{equation*}
R_{ S^{e_1}, S^{e_2} } ( \tau ) = - R_{ T^{(1)} } \left( e_2 + \tau_0 + \frac{N}{2} \right)
- R_{ T^{(1)} } \left( e_1 - \tau_0 - 1 + \frac{N}{2} \right).
\end{equation*}
\label{lemma:description of R(tau) by T^(1)}
\end{lemma}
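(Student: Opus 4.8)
The plan is to evaluate the defining correlation sum directly, using the interleaved structure to reduce everything to correlations of $T^{(1)}$ and $T^{(2)}$, and then to invoke Corollary~\ref{corollary:relation between type 1 and type 2} and Proposition~\ref{proposition:periodic cross-correlation of type 1 and type 2}. First I would observe that, since $T^{(1)}$ and $L^{e}(T^{(2)})$ are both $N$-periodic, the interleaved sequence $S^{e}$ is $2N$-periodic, so $R_{S^{e_1},S^{e_2}}(\tau) = \sum_{i=0}^{2N-1}(-1)^{s^{e_1}_i + s^{e_2}_{i+\tau}}$. Writing $i = 2k$ or $i = 2k+1$, the definition of the interleaved sequence gives $s^{e}_{2k} = t^{(1)}_{k}$ and $s^{e}_{2k+1} = t^{(2)}_{k+e}$, so each summand is governed either by $T^{(1)}$ (even positions) or by a shift of $T^{(2)}$ (odd positions).

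The decisive step is to split by the parity of $\tau$, since this controls whether $i$ and $i+\tau$ fall on positions of the same type or of opposite type. For $\tau = 2\tau_0$, an even $i$ pairs with an even $i+\tau$ and an odd $i$ with an odd $i+\tau$; the sum therefore splits as $\sum_{k=0}^{N-1}(-1)^{t^{(1)}_{k}+t^{(1)}_{k+\tau_0}} + \sum_{k=0}^{N-1}(-1)^{t^{(2)}_{k+e_1}+t^{(2)}_{k+\tau_0+e_2}}$. The first piece is exactly $R_{T^{(1)}}(\tau_0)$; shifting the summation index of the second piece by $e_1$ and using $N$-periodicity turns it into $R_{T^{(2)}}(\tau_0+e_2-e_1)$, which equals $R_{T^{(1)}}(\tau_0+e_2-e_1)$ by Corollary~\ref{corollary:relation between type 1 and type 2}. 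This yields the first identity.

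For $\tau = 2\tau_0+1$, an even $i$ now pairs with an odd $i+\tau$ and vice versa, so both resulting pieces are genuine cross-correlations between $T^{(1)}$ and $T^{(2)}$. The even-$i$ part is already in standard form and equals $R_{T^{(1)},T^{(2)}}(\tau_0+e_2)$; the odd-$i$ part, after substituting $j = k+\tau_0+1$, becomes $R_{T^{(1)},T^{(2)}}(e_1-\tau_0-1)$. Applying Proposition~\ref{proposition:periodic cross-correlation of type 1 and type 2}, which rewrites each cross-correlation at shift $\sigma$ as $-R_{T^{(1)}}(\sigma+N/2)$, then produces the second identity.

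The hard part is not conceptual but the bookkeeping: keeping the three parameters $e_1$, $e_2$, $\tau_0$ (together with the extra $+1$ from the odd shift) correctly aligned through the index substitutions. The most delicate case is the odd-$\tau$ one, where the odd-$i$ summand carries $T^{(2)}$ at index $k+e_1$ and $T^{(1)}$ at index $k+\tau_0+1$; recognizing this as a standard cross-correlation requires the substitution $j=k+\tau_0+1$, which is precisely what manufactures the argument $e_1-\tau_0-1$. The $N$-periodicity of $T^{(1)}$ and $T^{(2)}$ under these shifts is what guarantees that each reindexed sum is a genuine full-period correlation.
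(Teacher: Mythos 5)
Your proposal is correct and follows essentially the same route as the paper: split the correlation sum over $2N$ terms by the parity of the index, identify the even-offset case as a sum of an autocorrelation of $T^{(1)}$ and one of $T^{(2)}$ (then apply Corollary~\ref{corollary:relation between type 1 and type 2}), and the odd-offset case as a sum of two cross-correlations between $T^{(1)}$ and $T^{(2)}$ (then apply Proposition~\ref{proposition:periodic cross-correlation of type 1 and type 2}). The only difference is that you carry out the index substitutions explicitly, whereas the paper states the resulting decomposition directly; your bookkeeping, including the argument $e_1-\tau_0-1$ from the substitution $j=k+\tau_0+1$, matches the paper's.
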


\begin{proof}
Assume that $\tau = 2 \tau_0, \, 0 \leq \tau_0 \leq N - 1, \tau_0 \in \mathbb{Z}$.
By Corollary \ref{corollary:relation between type 1 and type 2},
\begin{eqnarray*}
R_{ S^{e_1}, S^{e_2} } ( \tau ) & = & R_{ T^{(1)} } ( \tau_0 ) + R_{ T^{(2)} } ( e_2 - e_1 + \tau_0 ) \\
& = & R_{ T^{(1)} } ( \tau_0 ) + R_{ T^{(1)} } ( e_2 - e_1 + \tau_0 ).
\end{eqnarray*}

Assume that $\tau = 2 \tau_0 + 1, \, 0 \leq \tau_0 \leq N - 1, \tau_0 \in \mathbb{Z}$.
By Proposition \ref{proposition:periodic cross-correlation of type 1 and type 2},
\begin{eqnarray*}
R_{ S^{e_1}, S^{e_2} } ( \tau )
& = & R_{ T^{(1)}, T^{(2)} } ( e_2 + \tau_0 ) + R_{ T^{(2)}, T^{(1)} } ( \tau_0 + 1 - e_1 ) \\
& = & R_{ T^{(1)}, T^{(2)} } ( e_2 + \tau_0 ) + R_{ T^{(1)}, T^{(2)} } ( e_1 - \tau_0 - 1 ) \\
& = & - R_{ T^{(1)} } \left( e_2 + \tau_0 + \frac{N}{2} \right)
- R_{ T^{(1)} } \left( e_1 - \tau_0 - 1 + \frac{N}{2} \right).
\end{eqnarray*}
\end{proof}

The periodic autocorrelation of the proposed sequence is given as follows:

\begin{theorem}
Let $p > 2$ be a prime number and $m > 1$ an integer. 
Let $\omega \in \mathbb{F}_{p^m}$ be a primitive element in $\mathbb{F}_{p^m}$.
Let $T^{(1)}$ and $T^{(2)}$ be the sequences defined as (\ref{equation:1st type NTU}) and (\ref{equation:2nd type NTU})
with respect to $p, m$ and $\omega$, respectively.
For $e \in \{ 0, 1, \dots , N - 1 \}$,
let $S^{e}$ be the interleaved sequence of $\{ T^{(1)}, L^{e} (T^{(2)}) \}$.
Put $N = 2 (p^m - 1) / (p - 1)$.
Then, for $\tau = 2 \tau_0, \, 0 \leq \tau_0 \leq N - 1, \tau_0 \in \mathbb{Z}$,
the autocorrelation $R_{ S^{e} } ( \tau )$ of $S^{e}$ is given as
\begin{equation}
\left\{
\begin{array}{ll}
2 N & \mbox{if~} \tau_0 = 0 \\
2 N_1 & \mbox{if~} \tau_0 = \frac{N}{2} \\
2 N_2 & \mbox{otherwise}.
\end{array}
\right.
\label{equation:periodic autocorrelation for even tau}
\end{equation}
For $\tau = 2 \tau_0 + 1, \, 0 \leq \tau_0 \leq N - 1, \tau_0 \in \mathbb{Z}$,
the autocorrelation $R_{ S^{e} } ( \tau )$ of $S^{e}$ is given as
\begin{equation}
\left\{
\begin{array}{ll}
- N - N_2 & \mbox{if~} \tau_0 \equiv -e - \frac{N}{2}, e - 1 + \frac{N}{2} \mathrm{~mod~} N \\
- N_1 - N_2 & \mbox{if~} \tau_0 \equiv - e, e - 1 \mathrm{~mod~} N \\
- 2 N_2 & \mbox{otherwise}
\end{array}
\right.
\label{equation:periodic autocorrelation for odd tau 1}
\end{equation}
if $2 e \not\equiv 1 - N / 2 \mathrm{~mod~} N$, and
\begin{equation}
\left\{
\begin{array}{ll}
- N - N_1 & \mbox{if~} \tau_0 \equiv - e, e - 1 \mathrm{~mod~} N \\
- 2 N_2 & \mbox{otherwise}
\end{array}
\right.
\label{equation:periodic autocorrelation for odd tau 2}
\end{equation}
if $2 e \equiv 1 - N / 2 \mathrm{~mod~} N$.
Here $N_1$ and $N_2$ are defined as in (\ref{equation:AC of NTU}).
\label{theorem:periodic autocorrelation}
\end{theorem}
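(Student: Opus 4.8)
The plan is to apply Lemma \ref{lemma:description of R(tau) by T^(1)} with $e_1 = e_2 = e$ and then substitute the three values of $R_{T^{(1)}}$ recorded in (\ref{equation:AC of NTU}), organizing the computation by the parity of $\tau$. For even $\tau = 2\tau_0$ the lemma gives $R_{S^e}(\tau) = R_{T^{(1)}}(\tau_0) + R_{T^{(1)}}(\tau_0) = 2 R_{T^{(1)}}(\tau_0)$, and reading off the three cases of (\ref{equation:AC of NTU}) yields (\ref{equation:periodic autocorrelation for even tau}) immediately; this branch needs no further work.

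For odd $\tau = 2\tau_0 + 1$ the lemma gives
\[
R_{S^e}(\tau) = -R_{T^{(1)}}\left(e + \tau_0 + \tfrac{N}{2}\right) - R_{T^{(1)}}\left(e - \tau_0 - 1 + \tfrac{N}{2}\right),
\]
and I would evaluate the two terms separately via (\ref{equation:AC of NTU}). The first term equals $-N$, $-N_1$, or $-N_2$ according as $\tau_0 \equiv -e-N/2$, $\tau_0 \equiv -e$, or neither modulo $N$; the second equals $-N$, $-N_1$, or $-N_2$ according as $\tau_0 \equiv e-1+N/2$, $\tau_0 \equiv e-1$, or neither modulo $N$. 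Summing the two contributions produces $R_{S^e}(\tau)$, provided one keeps careful track of when the distinguished residues of the two terms collide.

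This collision analysis is the heart of the proof and the step I expect to be the main obstacle. Comparing the two non-generic residues of the first term against those of the second yields four equations; two of them reduce to $2e \equiv 1 \bmod N$ and the other two to $2e \equiv 1 - N/2 \bmod N$. Here I would invoke that $N = 2(p^m-1)/(p-1)$ is even (noted just after (\ref{equation:1st type NTU})): since $2e$ is then even while $1$ is odd, the congruence $2e \equiv 1 \bmod N$ is impossible. This eliminates two of the four potential collisions and shows that the only relevant dichotomy is whether or not $2e \equiv 1 - N/2 \bmod N$. Within each single term the two distinguished residues always differ by $N/2 \not\equiv 0$, so they are automatically distinct.

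It then remains to split into the two cases. If $2e \not\equiv 1 - N/2 \bmod N$, the four distinguished residues $-e-N/2,\,-e,\,e-1+N/2,\,e-1$ are pairwise distinct, so at each of them exactly one term is non-generic while the other equals $-N_2$; collecting the $-N$-residues, the $-N_1$-residues, and the remaining values gives (\ref{equation:periodic autocorrelation for odd tau 1}). If $2e \equiv 1 - N/2 \bmod N$, the residues pair up as $\{-e-N/2,\,e-1\}$ and $\{-e,\,e-1+N/2\}$, so at $\tau_0 \equiv e-1$ one term contributes $-N$ and the other $-N_1$, and symmetrically at $\tau_0 \equiv -e$; the two special values thus collapse to the single value $-N - N_1$, giving (\ref{equation:periodic autocorrelation for odd tau 2}). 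Checking the resulting table against the two worked examples would provide a final confirmation of the bookkeeping.
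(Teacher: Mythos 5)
Your proposal is correct and follows essentially the same route as the paper: even $\tau$ via $R_{S^e}(2\tau_0)=2R_{T^{(1)}}(\tau_0)$, odd $\tau$ via the two-term formula from Lemma \ref{lemma:description of R(tau) by T^(1)} followed by a case analysis on which distinguished residues coincide. Your explicit observation that the collisions $2e\equiv 1 \bmod N$ are ruled out because $N$ is even is exactly the (implicit) justification behind the paper's claim that the four residues are incongruent when $2e\not\equiv 1-N/2 \bmod N$.
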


\begin{proof}
Assume that $\tau = 2 \tau_0, \, 0 \leq \tau_0 \leq N - 1, \tau_0 \in \mathbb{Z}$.
Since $R_{ S^{e} } ( \tau ) = 2 R_{ T^{(1)} } ( \tau_0 )$
by Lemma \ref{lemma:description of R(tau) by T^(1)},
(\ref{equation:periodic autocorrelation for even tau}) is satisfied.

Assume that $\tau = 2 \tau_0 + 1, \, 0 \leq \tau_0 \leq N - 1, \tau_0 \in \mathbb{Z}$.
Then we have
$R_{ S^{e} } ( \tau )
= - R_{ T^{(1)} } \left( e + \tau_0 + N / 2 \right) - R_{ T^{(1)} } \left( e - \tau_0 - 1 + N / 2 \right)$
by Lemma \ref{lemma:description of R(tau) by T^(1)}.
On the other hand,
\begin{equation*}
R_{ T^{(1)} } \left( e + \tau_0 + \frac{N}{2} \right) =
\left\{
\begin{array}{ll}
N & \mbox{if~} \tau_0 \equiv - e - \frac{N}{2} \mathrm{~mod~} N \\
N_1 & \mbox{if~} \tau_0 \equiv - e \mathrm{~mod~} N \\
N_2 & \mbox{otherwise}
\end{array}
\right.
\end{equation*}
and
\begin{equation*}
R_{ T^{(1)} } \left( e - \tau_0 - 1 + \frac{N}{2} \right) =
\left\{
\begin{array}{ll}
N & \mbox{if~} \tau_0 \equiv e - 1 + \frac{N}{2} \mathrm{~mod~} N \\
N_1 & \mbox{if~} \tau_0 \equiv e - 1 \mathrm{~mod~} N \\
N_2 & \mbox{otherwise}.
\end{array}
\right.
\end{equation*}
If $2 e \not\equiv 1 - N / 2 \mathrm{~mod~} N$,
then $- e - N / 2, - e, e - 1 + N / 2, e - 1$ are incongruent modulo $N$.
Hence we have (\ref{equation:periodic autocorrelation for odd tau 1}).
If $2 e \equiv 1 - N / 2 \mathrm{~mod~} N$,
then $- e - N / 2 \equiv e - 1 \mathrm{~mod~} N$ and $- e \equiv e - 1 + N / 2 \mathrm{~mod~} N$.
Hence we have (\ref{equation:periodic autocorrelation for odd tau 2}).
\end{proof}

\begin{corollary}
$S^{e}$ has period $2 N = 4 (p^m - 1) / (p - 1)$.
\end{corollary}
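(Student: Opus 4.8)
The plan is to use the standard fact that, for a sequence already known to be $2N$-periodic, a shift $\tau$ with $0 \le \tau \le 2N - 1$ is a period if and only if $R_{S^{e}}(\tau) = 2N$: each of the $2N$ summands $(-1)^{s^{e}_i + s^{e}_{i+\tau}}$ equals $+1$ exactly when $s^{e}_i = s^{e}_{i+\tau}$, so the sum attains its maximal value $2N$ precisely when the shifted sequence agrees with $S^{e}$ everywhere. First I would record that $S^{e}$ is indeed $2N$-periodic: by the interleaved structure, shifting $S^{e}$ by $2N$ positions shifts each of its two component sequences $T^{(1)}$ and $L^{e}(T^{(2)})$ by $N$ positions, and both have period $N$. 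Hence the minimal period divides $2N$, and it suffices to show $R_{S^{e}}(\tau) < 2N$ for every $\tau$ with $1 \le \tau \le 2N - 1$.

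Next I would read off all possible values of $R_{S^{e}}(\tau)$ from Theorem~\ref{theorem:periodic autocorrelation}, treating even and odd $\tau$ separately, which together cover all $\tau$ in $\{0,1,\dots,2N-1\}$. For even $\tau = 2\tau_0$ the value equals $2N$ only when $\tau_0 = 0$, i.e. $\tau = 0$; the other possible values are $2N_1$ and $2N_2$. For odd $\tau$ the value lies in $\{\,-N - N_2,\ -N_1 - N_2,\ -2N_2,\ -N - N_1\,\}$, according to whether $2e \equiv 1 - N/2 \bmod N$. It then remains to verify that none of the six exceptional values $2N_1$, $2N_2$, $-N - N_2$, $-N_1 - N_2$, $-2N_2$, $-N - N_1$ equals $2N$.

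These are elementary estimates on the constants of \ref{equation:AC of NTU}. From $N_2 = 2(p^{m-2} - 1)/(p-1)$ one gets $0 \le N_2 < N$, so $2N_2 < 2N$ and $-2N_2 \le 0 < 2N$; since $N_1 < 0 < N$ one has $2N_1 < 2N$; and the odd-shift values $-N - N_2$ and $-N - N_1$ are at most $0 < 2N$, using $N_2 \ge 0$ and the general bound $|R_{T^{(1)}}(\tau)| \le N$ (so $-N_1 \le N$). The only value that could a priori be positive is $-N_1 - N_2$, but $-N_1 \le N$ forces $-N_1 - N_2 \le N < 2N$. I expect this bookkeeping to be the only point needing care — in particular the observation that $N_1 \ne -N$, which would fail only if $p^{m-1} = 1$, impossible for $p > 2$ and $m > 1$ — but it is routine rather than deep. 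Combining these bounds with the even-case analysis shows that $R_{S^{e}}(\tau) = 2N$ holds only at $\tau = 0$, so the minimal period of $S^{e}$ is exactly $2N = 4(p^m - 1)/(p - 1)$.
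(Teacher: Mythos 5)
Your proof is correct and follows the same route the paper intends: the corollary is stated immediately after Theorem \ref{theorem:periodic autocorrelation} precisely because the autocorrelation values computed there equal $2N$ only at $\tau = 0$, which (together with the obvious $2N$-periodicity from the interleaved structure) pins the minimal period at exactly $2N$. Your bookkeeping on the constants $N_1, N_2$ is accurate (indeed $N_1 < 0 \le N_2 < N$ and $-N_1 < N$ since $p^{m-1} > 1$), so all non-zero shifts are excluded as claimed.
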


By the interleaved structure, $S^{e}$ has the balance property,
namely the number of zeros and ones in one period of $S^{e}$ are $N$.

\begin{remark}
If $m$ is even, then $2 e \not\equiv 1 - N / 2 \mathrm{~mod~} N$.
\end{remark}

\begin{example}
Let $p = 11$ and $m = 2$, so that $N = 24$.
Let $\alpha$ be a root of the irreducible polynomial $x^2 + 7 x + 2$, and
$\omega = 9 + 2 \alpha \in \mathbb{F}_p ( \alpha )$.
Figure \ref{figure:AC of IS of NTU sequences in the case of p = 11, m = 2, e = 17}
describes the graph of the periodic autocorrelation of $S^{17}$.
\begin{figure}[htbp]
\begin{center}
\includegraphics[scale=0.85]{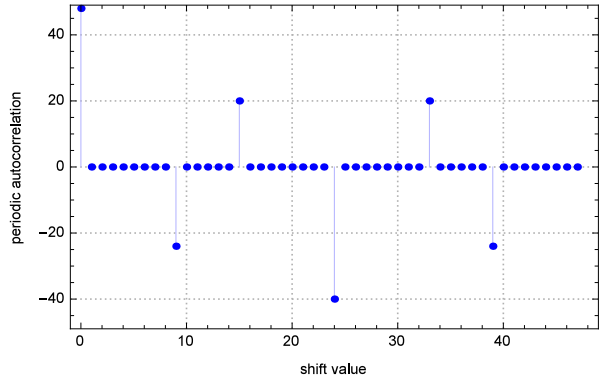}
\end{center}
\caption{The graph of the periodic autocorrelation of $S^{17}$ in the case of $p = 11, m = 2$.}
\label{figure:AC of IS of NTU sequences in the case of p = 11, m = 2, e = 17}
\end{figure}
\end{example}

\begin{example}
Let $p = 5$ and $m = 3$, so that $N = 62$.
Let $\alpha$ be a root of the irreducible polynomial $x^3 + 3 x^2 + 2 x + 3$, and
$\omega = 1 + \alpha + 2 \alpha^2 \in \mathbb{F}_p ( \alpha )$.
Figure \ref{figure:AC of IS of NTU sequences in the case of p = 5, m = 3, e = 25} and
Fig.\,\ref{figure:AC of IS of NTU sequences in the case of p = 5, m = 3, e = 16}
describe the graph of the periodic autocorrelation of $S^{25}$ and $S^{16}$, respectively.
\begin{figure}[htbp]
\begin{center}
\includegraphics[scale=0.85]{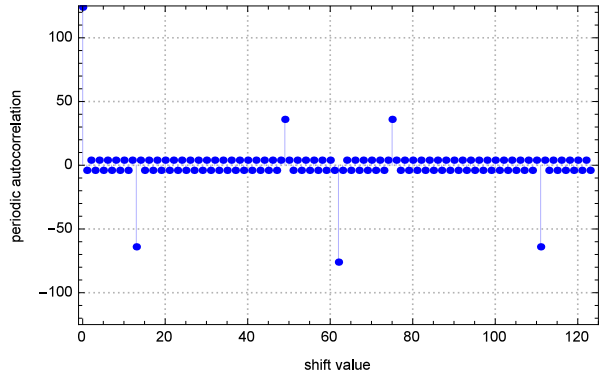}
\end{center}
\caption{The graph of the periodic autocorrelation of $S^{25}$ in the case of $p = 5, m = 3$.}
\label{figure:AC of IS of NTU sequences in the case of p = 5, m = 3, e = 25}
\end{figure}
\begin{figure}[htbp]
\begin{center}
\includegraphics[scale=0.85]{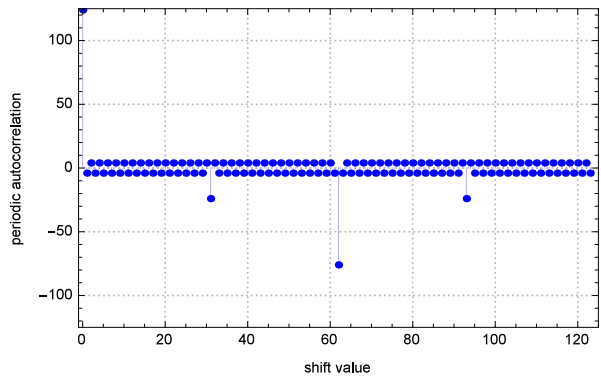}
\end{center}
\caption{The graph of the periodic autocorrelation of $S^{16}$ in the case of $p = 5, m = 3$.}
\label{figure:AC of IS of NTU sequences in the case of p = 5, m = 3, e = 16}
\end{figure}
\end{example}

Now we show the cross-correlation distribution of two proposed sequences that have different shift widths.

\begin{theorem}
Let $p > 2$ be a prime number and $m > 1$ an integer. 
Let $\omega \in \mathbb{F}_{p^m}$ be a primitive element in $\mathbb{F}_{p^m}$.
Let $T^{(1)}$ and $T^{(2)}$ be the sequences defined as (\ref{equation:1st type NTU}) and (\ref{equation:2nd type NTU}) with respect to $p, m$ and $\omega$, respectively.
For $e_1, e_2 \in \{ 0, 1, \dots , N - 1 \}, \, e_1 < e_2$,
let $S^{e_1}$ and $S^{e_2}$ be the interleaved sequences of
$\{ T^{(1)}, L^{e_1} (T^{(2)}) \}$ and $\{ T^{(1)}, L^{e_2} (T^{(2)}) \}$, respectively.
Put $N = 2 (p^m - 1) / (p - 1)$.
Then, for $\tau = 2 \tau_0, \, 0 \leq \tau_0 \leq N - 1, \tau_0 \in \mathbb{Z}$,
the cross-correlation $R_{S^{e_1}, S^{e_2}} (\tau)$ of $S^{e_1}$ and $S^{e_2}$ is given as
\begin{equation*}
\left\{
\begin{array}{ll}
N + N_2 & \mbox{if~} \tau_0 \equiv 0, e_1 - e_2 \mathrm{~mod~} N \\
N_1 + N_2 & \mbox{if~} \tau_0 \equiv \frac{N}{2}, e_1 - e_2 + \frac{N}{2} \mathrm{~mod~} N \\
2 N_2 & \mbox{otherwise}
\end{array}
\right.
\end{equation*}
if $e_2 - e_1 \not\equiv N / 2 \mathrm{~mod~} N$, and
\begin{equation*}
\left\{
\begin{array}{ll}
N + N_1 & \mbox{if~} \tau_0 = 0, \frac{N}{2} \\
2 N_2 & \mbox{otherwise}
\end{array}
\right.
\end{equation*}
if $e_2 - e_1 \equiv N / 2 \mathrm{~mod~} N$.
For $\tau = 2 \tau_0 + 1, 0 \leq \tau_0 \leq N - 1, \tau_0 \in \mathbb{Z}$,
the cross-correlation $R_{S^{e_1}, S^{e_2}} (\tau)$ of $S^{e_1}$ and $S^{e_2}$ is given as
\begin{equation*}
\left\{
\begin{array}{ll}
- N - N_2 & \mbox{if~} \tau_0 \equiv - e_2 - \frac{N}{2}, e_1 - 1 + \frac{N}{2} \mathrm{~mod~} N \\
- N_1 - N_2 & \mbox{if~} \tau_0 \equiv - e_2, e_1 - 1 \mathrm{~mod~} N \\
- 2 N_2 & \mbox{otherwise}
\end{array}
\right.
\end{equation*}
if $e_1 + e_2 \not\equiv 1 \mathrm{~mod~} N$ and $e_1 + e_2 \not\equiv 1 - N / 2 \mathrm{~mod~} N$,
\begin{equation*}
\left\{
\begin{array}{ll}
- 2 N & \mbox{if~} \tau_0 \equiv - e_2 - \frac{N}{2} \mathrm{~mod~} N \\
- 2 N_1 & \mbox{if~} \tau_0 \equiv - e_2 \mathrm{~mod~} N \\
- 2 N_2 & \mbox{otherwise}
\end{array}
\right.
\end{equation*}
if $e_1 + e_2 \equiv 1 \mathrm{~mod~} N$, and
\begin{equation*}
\left\{
\begin{array}{ll}
- N - N_1 & \mbox{if~} \tau_0 \equiv - e_2 - \frac{N}{2}, - e_2 \mathrm{~mod~} N \\
- 2 N_2 & \mbox{otherwise}
\end{array}
\right.
\end{equation*}
if $e_1 + e_2 \equiv 1 - N / 2 \mathrm{~mod~} N$.
Here $N_1$ and $N_2$ are defined as in (\ref{equation:AC of NTU}).
\end{theorem}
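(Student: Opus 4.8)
The plan is to reduce the cross-correlation to the three-valued autocorrelation of $T^{(1)}$ recorded in (\ref{equation:AC of NTU}) by means of Lemma \ref{lemma:description of R(tau) by T^(1)}, and then to carry out a careful case analysis of which arguments land on the special residues $0$ and $N/2$ modulo $N$; this mirrors the proof of Theorem \ref{theorem:periodic autocorrelation}, the difference being that the two shift widths $e_1, e_2$ are now distinct. First I would split according to the parity of $\tau$. For $\tau = 2\tau_0$, Lemma \ref{lemma:description of R(tau) by T^(1)} gives $R_{S^{e_1}, S^{e_2}}(\tau) = R_{T^{(1)}}(\tau_0) + R_{T^{(1)}}(e_2 - e_1 + \tau_0)$, and for $\tau = 2\tau_0 + 1$ it gives $R_{S^{e_1}, S^{e_2}}(\tau) = -R_{T^{(1)}}(e_2 + \tau_0 + N/2) - R_{T^{(1)}}(e_1 - \tau_0 - 1 + N/2)$.

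For the even case I would substitute (\ref{equation:AC of NTU}): the first summand equals $N, N_1, N_2$ according as $\tau_0 \equiv 0, N/2$, or neither modulo $N$, while the second equals $N, N_1, N_2$ according as $\tau_0 \equiv e_1 - e_2, e_1 - e_2 + N/2$, or neither. Since the summands simply add, the only delicate point is whether the four distinguished residues $0, N/2, e_1 - e_2, e_1 - e_2 + N/2$ are pairwise incongruent. Using $0 < e_2 - e_1 < N$ from $e_1 < e_2$, I would verify that they are all distinct exactly when $e_2 - e_1 \not\equiv N/2 \bmod N$, which yields the first even-$\tau$ table. When $e_2 - e_1 \equiv N/2 \bmod N$ the residue $e_1 - e_2$ collapses onto $N/2$ and $e_1 - e_2 + N/2$ onto $0$, so at each of $\tau_0 \equiv 0$ and $\tau_0 \equiv N/2$ one summand is $N$ and the other is $N_1$, giving $N + N_1$ and hence the second even-$\tau$ table.

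For the odd case I would argue in the same way. Here the first summand equals $N, N_1, N_2$ according as $\tau_0 \equiv -e_2 - N/2, -e_2$, or neither, and the second equals $N, N_1, N_2$ according as $\tau_0 \equiv e_1 - 1 + N/2, e_1 - 1$, or neither. The coincidence pattern among the four residues $-e_2 - N/2, -e_2, e_1 - 1 + N/2, e_1 - 1$ is governed by $e_1 + e_2 \bmod N$: they are pairwise distinct precisely when $e_1 + e_2 \not\equiv 1$ and $e_1 + e_2 \not\equiv 1 - N/2$, giving the first odd-$\tau$ table. When $e_1 + e_2 \equiv 1$ the two $N$-residues coincide with each other and the two $N_1$-residues coincide with each other, forcing the values $-2N$ and $-2N_1$; when $e_1 + e_2 \equiv 1 - N/2$ each $N$-residue instead pairs with an $N_1$-residue, forcing $-N - N_1$.

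The reduction via Lemma \ref{lemma:description of R(tau) by T^(1)} is immediate and the substitution of (\ref{equation:AC of NTU}) is mechanical, so there is no genuine analytic obstacle. The one place that demands care is the bookkeeping of the congruence coincidences: identifying which exceptional regime one is in and checking that the collapses are exactly those dictated by $N/2 \equiv -N/2 \bmod N$ together with the relations $e_1 + e_2 \equiv 1$ or $e_1 + e_2 \equiv 1 - N/2$. I expect this enumeration, rather than any single computation, to be the main point to get right.
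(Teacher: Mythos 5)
Your proposal is correct and is precisely the paper's argument: the paper proves this theorem by stating it goes ``exactly like the proof of Theorem \ref{theorem:periodic autocorrelation},'' i.e.\ reduce via Lemma \ref{lemma:description of R(tau) by T^(1)} to sums of values of $R_{T^{(1)}}$ from (\ref{equation:AC of NTU}) and then track which of the four distinguished residues coincide in each exceptional regime. Your bookkeeping of the collapses (governed by $e_2 - e_1 \equiv N/2$ in the even case and $e_1 + e_2 \equiv 1$ or $1 - N/2$ in the odd case) checks out.
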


\begin{proof}
Exactly like the proof of Theorem \ref{theorem:periodic autocorrelation}.
\end{proof}

\begin{example}
Let $p = 11$ and $m = 2$, so that $N = 24$.
Let $\alpha$ be a root of the irreducible polynomial $x^2 + 7 x + 2$, and
$\omega = 9 + 2 \alpha \in \mathbb{F}_p ( \alpha )$.
Figure \ref{figure:CC of IS of NTU sequences in the case of p = 11, m = 2, e1 = 9, e2 = 11}
describes the graph of the periodic cross-correlation of $S^{9}$ and $S^{11}$.
Figure \ref{figure:CC of IS of NTU sequences in the case of p = 11, m = 2, e1 = 6, e2 = 18}
describes the graph of the periodic cross-correlation of $S^{6}$ and $S^{18}$.
Figure \ref{figure:CC of IS of NTU sequences in the case of p = 11, m = 2, e1 = 11, e2 = 14}
describes the graph of the periodic cross-correlation of $S^{11}$ and $S^{14}$.
Figure \ref{figure:CC of IS of NTU sequences in the case of p = 11, m = 2, e1 = 2, e2 = 11}
describes the graph of the periodic cross-correlation of $S^{2}$ and $S^{11}$.
\begin{figure}[htbp]
\begin{center}
\includegraphics[scale=0.85]{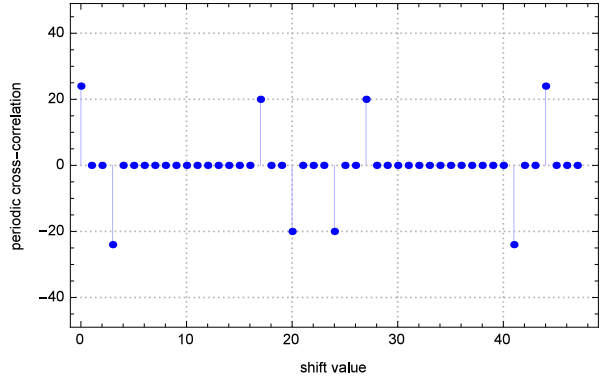}
\end{center}
\caption{The graph of the periodic cross-correlation of $S^{9}$ and $S^{11}$ in the case of $p = 11, m = 2$.}
\label{figure:CC of IS of NTU sequences in the case of p = 11, m = 2, e1 = 9, e2 = 11}
\end{figure}
\begin{figure}[htbp]
\begin{center}
\includegraphics[scale=0.85]{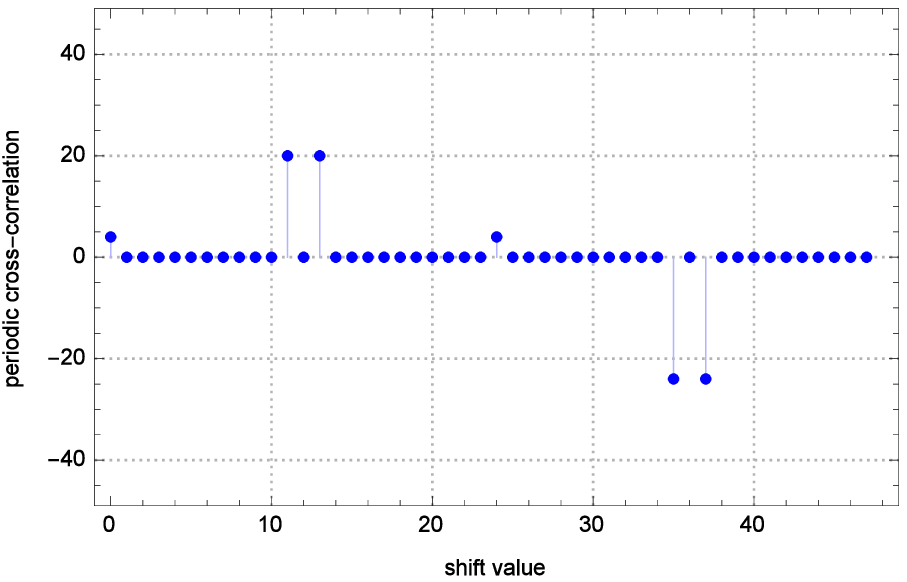}
\end{center}
\caption{The graph of the periodic cross-correlation of $S^{6}$ and $S^{18}$ in the case of $p = 11, m = 2$.}
\label{figure:CC of IS of NTU sequences in the case of p = 11, m = 2, e1 = 6, e2 = 18}
\end{figure}
\begin{figure}[htbp]
\begin{center}
\includegraphics[scale=0.85]{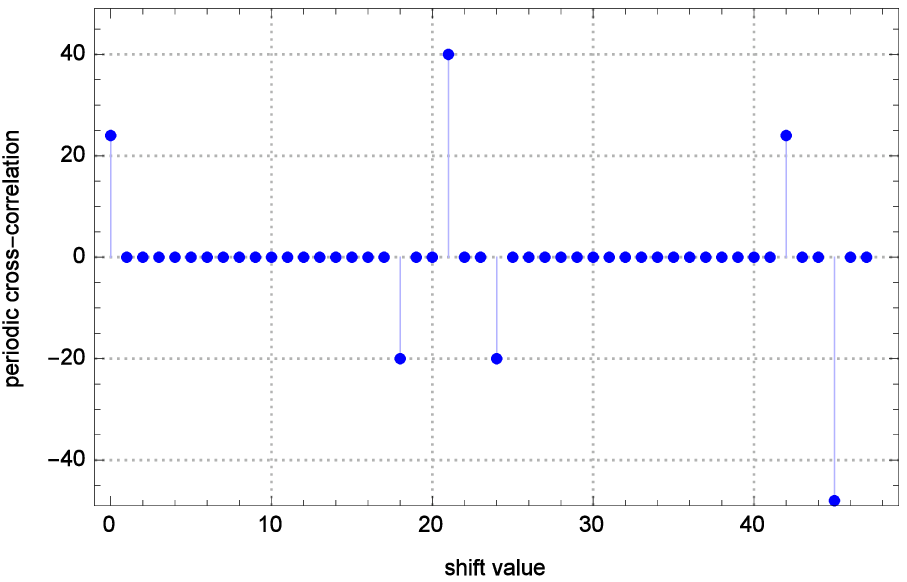}
\end{center}
\caption{The graph of the periodic cross-correlation of $S^{11}$ and $S^{14}$ in the case of $p = 11, m = 2$.}
\label{figure:CC of IS of NTU sequences in the case of p = 11, m = 2, e1 = 11, e2 = 14}
\end{figure}
\begin{figure}[htbp]
\begin{center}
\includegraphics[scale=0.85]{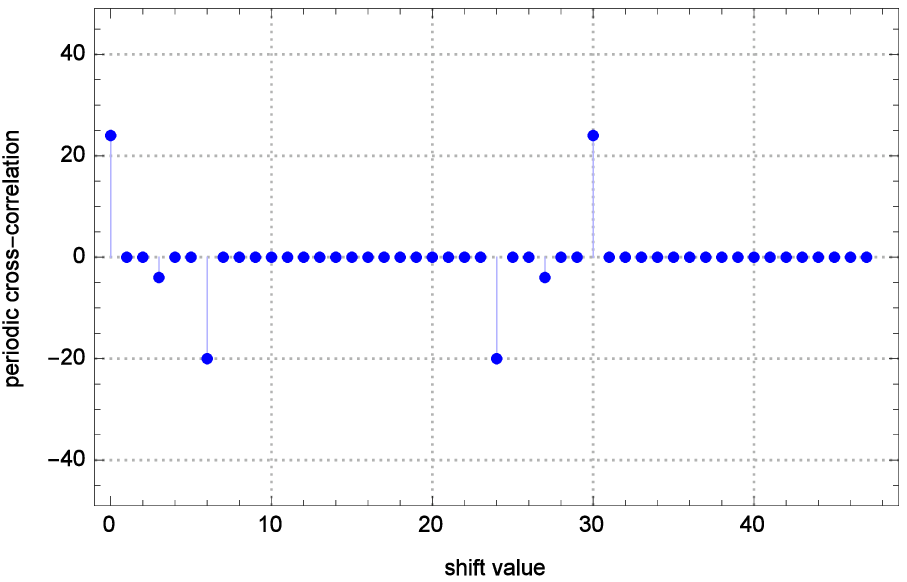}
\end{center}
\caption{The graph of the periodic cross-correlation of $S^{2}$ and $S^{11}$ in the case of $p = 11, m = 2$.}
\label{figure:CC of IS of NTU sequences in the case of p = 11, m = 2, e1 = 2, e2 = 11}
\end{figure}
\end{example}

\begin{example}
Let $p = 5$ and $m = 3$, so that $N = 62$.
Let $\alpha$ be a root of the irreducible polynomial $x^3 + 3 x^2 + 2 x + 3$, and
$\omega = 1 + \alpha + 2 \alpha^2 \in \mathbb{F}_p ( \alpha )$.
Figure \ref{figure:CC of IS of NTU sequences in the case of p = 5, m = 3, e1 = 25, e2 = 41}
describes the graph of the periodic cross-correlation of $S^{25}$ and $S^{41}$.
Figure \ref{figure:CC of IS of NTU sequences in the case of p = 5, m = 3, e1 = 4, e2 = 35}
describes the graph of the periodic cross-correlation of $S^{4}$ and $S^{35}$.
Figure \ref{figure:CC of IS of NTU sequences in the case of p = 5, m = 3, e1 = 26, e2 = 37}
describes the graph of the periodic cross-correlation of $S^{26}$ and $S^{37}$.
Figure \ref{figure:CC of IS of NTU sequences in the case of p = 5, m = 3, e1 = 45, e2 = 49}
describes the graph of the periodic cross-correlation of $S^{45}$ and $S^{49}$.
\begin{figure}[htbp]
\begin{center}
\includegraphics[scale=0.85]{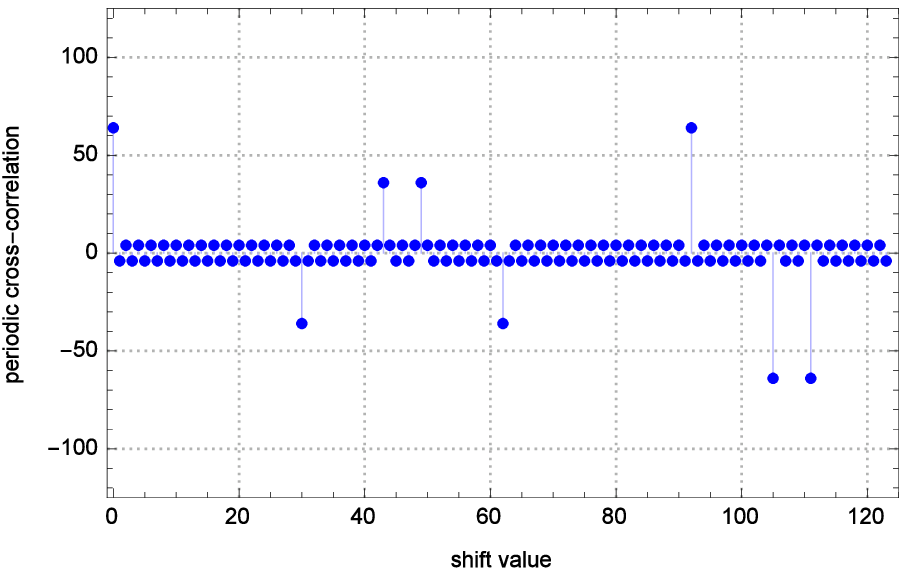}
\end{center}
\caption{The graph of the periodic cross-correlation of $S^{25}$ and $S^{41}$ in the case of $p = 5, m = 3$.}
\label{figure:CC of IS of NTU sequences in the case of p = 5, m = 3, e1 = 25, e2 = 41}
\end{figure}
\begin{figure}[htbp]
\begin{center}
\includegraphics[scale=0.85]{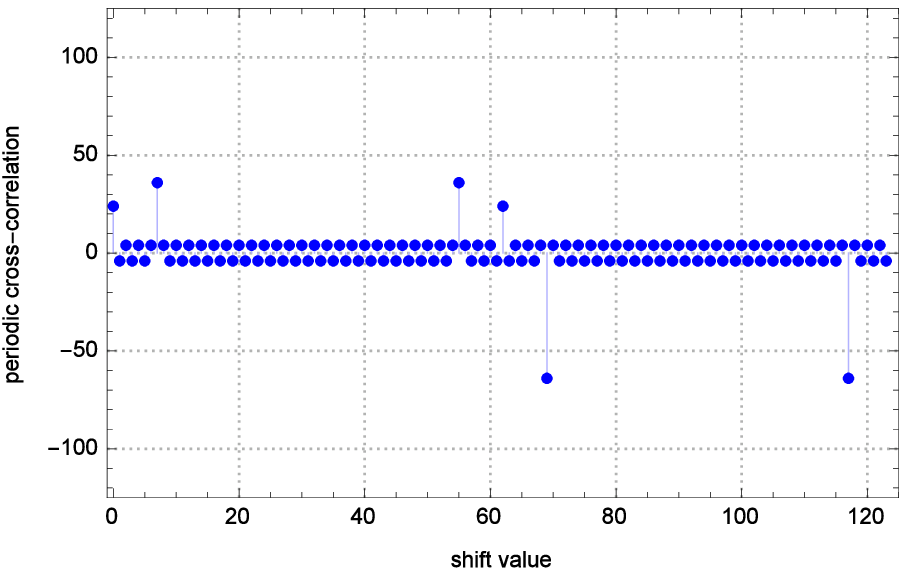}
\end{center}
\caption{The graph of the periodic cross-correlation of $S^{4}$ and $S^{35}$ in the case of $p = 5, m = 3$.}
\label{figure:CC of IS of NTU sequences in the case of p = 5, m = 3, e1 = 4, e2 = 35}
\end{figure}
\begin{figure}[htbp]
\begin{center}
\includegraphics[scale=0.85]{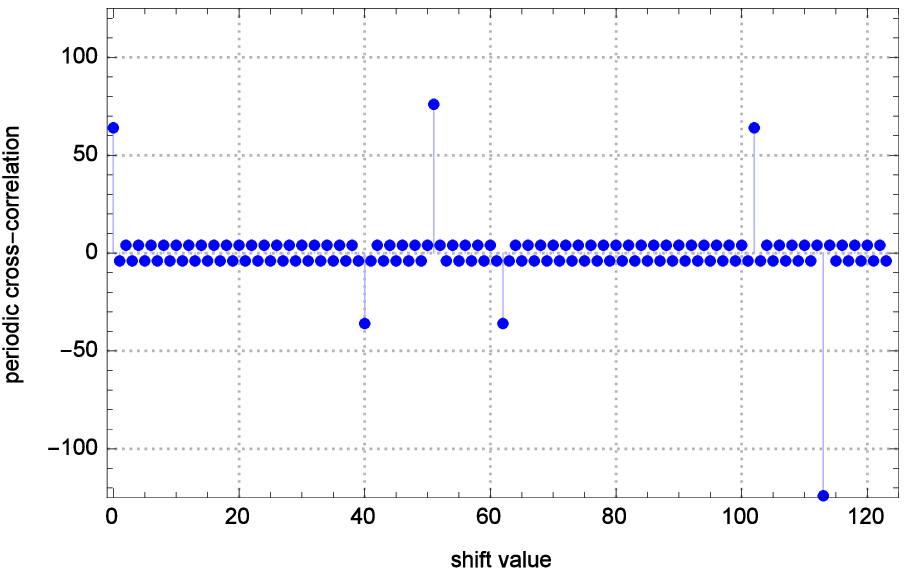}
\end{center}
\caption{The graph of the periodic cross-correlation of $S^{26}$ and $S^{37}$ in the case of $p = 5, m = 3$.}
\label{figure:CC of IS of NTU sequences in the case of p = 5, m = 3, e1 = 26, e2 = 37}
\end{figure}
\begin{figure}[htbp]
\begin{center}
\includegraphics[scale=0.85]{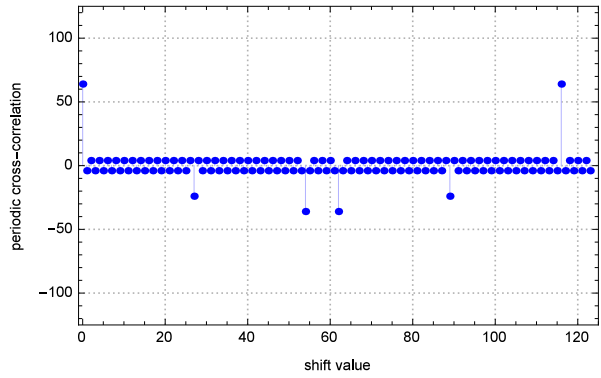}
\end{center}
\caption{The graph of the periodic cross-correlation of $S^{45}$ and $S^{49}$ in the case of $p = 5, m = 3$.}
\label{figure:CC of IS of NTU sequences in the case of p = 5, m = 3, e1 = 45, e2 = 49}
\end{figure}
\end{example}

\subsection{Linear complexity of the proposed sequences}

The minimal polynomial and linear complexity of the proposed sequence are given as follows:

\begin{theorem}
Let $p > 2$ be a prime number and $m > 1$ an integer. 
Let $\omega \in \mathbb{F}_{p^m}$ be a primitive element in $\mathbb{F}_{p^m}$.
Let $T^{(1)}$ and $T^{(2)}$ be the sequences defined as (\ref{equation:1st type NTU}) and (\ref{equation:2nd type NTU}) with respect to $p, m$ and $\omega$, respectively.
For $e \in \{ 0, 1, \dots , N - 1 \}$,
let $S^{e}$ be the interleaved sequence of $\{ T^{(1)}, L^{e} (T^{(2)}) \}$.
Put $N = 2 (p^m - 1) / (p - 1)$.
Then the minimal polynomial $m(x) \in \mathbb{F}_2 [x]$ of $S^{e}$ is given as
\begin{equation*}
m(x) = \frac{x^{2N} - 1}{x^{G ( N, e )} - 1},
\end{equation*}
where $G ( N, e ) = \mathrm{GCD} (N / 2^{\nu_2 (N)}, - 2 e + 1 ~\underline{\mathrm{mod}}~ N / 2^{\nu_2 (N)})$ and
$\nu_2 (N)$ is the exponent of the largest power of $2$ which divides $N$.
Therefore the linear complexity $L( S^{e} )$ of $S^{e}$ is given as
\begin{equation*}
L( S^{e} ) = 2 N - G ( N, e ).
\end{equation*}
\end{theorem}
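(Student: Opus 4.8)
The plan is to work directly with the defining polynomial $S^e(x) = \sum_{n=0}^{2N-1} s^e_n x^n \in \mathbb{F}_2[x]$ of the $2N$-periodic sequence $S^e$ and to compute $\mathrm{GCD}(x^{2N}-1, S^e(x))$; the two claimed formulas for $m(x)$ and $L(S^e)$ then follow at once from the identities $m(x) = (x^{2N}-1)/\mathrm{GCD}(x^{2N}-1, S^e(x))$ and $L(S^e) = 2N - \mathrm{deg}\,\mathrm{GCD}(x^{2N}-1, S^e(x))$ recalled in Sect.\,\ref{section:Preliminaries}. Since $G(N,e)$ divides the odd number $N_{\mathrm{odd}} := N/2^{\nu_2(N)}$, the polynomial $x^{G(N,e)}-1$ is squarefree and divides $x^{2N}-1$, so the whole theorem reduces to the single identity $\mathrm{GCD}(x^{2N}-1, S^e(x)) = x^{G(N,e)}-1$.

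First I would unwind the interleaved structure: since $s^e_{2j} = t^{(1)}_j$ and $s^e_{2j+1} = t^{(2)}_{j+e}$, one has $S^e(x) = T^{(1)}(x^2) + x\,L^e(T^{(2)})(x^2)$, where $T^{(1)}(y) = \sum_{j=0}^{N-1} t^{(1)}_j y^j$ and $L^e(T^{(2)})(y) = \sum_{j=0}^{N-1} t^{(2)}_{j+e} y^j$. The decisive simplification is the Frobenius identity over $\mathbb{F}_2$: any polynomial $f$ with coefficients in $\mathbb{F}_2$ satisfies $f(x^2) = f(x)^2$. Writing $P := T^{(1)}(x)$ and $Q := L^e(T^{(2)})(x)$, this yields $S^e(x) = P(x)^2 + x\,Q(x)^2$. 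In parallel I would use $x^{2N}-1 = (x^N-1)^2 = (x^{N_{\mathrm{odd}}}-1)^{2^{\nu_2(N)+1}}$ over $\mathbb{F}_2$, so that the distinct irreducible factors of $x^{2N}-1$ are precisely those of the squarefree polynomial $x^{N_{\mathrm{odd}}}-1$, each occurring with multiplicity $2^{\nu_2(N)+1}$. Thus it suffices to determine, for each such irreducible $f$ with root $\zeta$ (so $\zeta^{N_{\mathrm{odd}}}=1$, hence $\zeta^N=1$), the exact multiplicity of $f$ in $S^e(x)$.

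Here I would invoke that $L(T^{(1)}) = L(T^{(2)}) = N$ is maximal, which by the formula of Sect.\,\ref{section:Preliminaries} forces $\mathrm{GCD}(x^N-1, T^{(1)}(x)) = \mathrm{GCD}(x^N-1, T^{(2)}(x)) = 1$, so $T^{(1)}(\zeta)\neq 0$ and $T^{(2)}(\zeta)\neq 0$. Evaluating, $P(\zeta) = T^{(1)}(\zeta)$ and $Q(\zeta) = \zeta^{-e}T^{(2)}(\zeta)$. Using Lemma~\ref{lemma:relation between type 1 and type 2} ($t^{(2)}_n = t^{(1)}_{n+N/2}+1$) together with $\sum_{j=0}^{N-1}\zeta^j = 0$, which holds for every $N$-th root of unity because $N$ is even, I would obtain $T^{(2)}(\zeta) = \zeta^{-N/2}T^{(1)}(\zeta)$, whence $Q(\zeta) = \zeta^{-e-N/2}T^{(1)}(\zeta)\neq 0$. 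Therefore $S^e(\zeta) = T^{(1)}(\zeta)^2\bigl(1 + \zeta^{1-2e-N}\bigr) = T^{(1)}(\zeta)^2\bigl(1 + \zeta^{1-2e}\bigr)$, which vanishes exactly when $\zeta^{2e-1} = 1$. Combined with $\zeta^{N_{\mathrm{odd}}}=1$, this says $f \mid S^e(x)$ if and only if $\zeta^{G(N,e)} = 1$, since $G(N,e) = \mathrm{GCD}(N_{\mathrm{odd}}, 2e-1) = \mathrm{GCD}(N, 2e-1)$ (the odd factor $2e-1$ is coprime to $2^{\nu_2(N)}$); equivalently $f \mid S^e(x)$ iff $f \mid x^{G(N,e)}-1$.

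The remaining point, which I expect to be the crux, is that whenever $f \mid S^e(x)$ the multiplicity is exactly $1$, so that capping at $2^{\nu_2(N)+1}$ leaves it at $1$. The clean resolution is differentiation over $\mathbb{F}_2$: since $(P^2)' = 0$ and $(xQ^2)' = Q^2$, we get $(S^e)'(x) = Q(x)^2$. Because $f$ is separable (it divides the squarefree $x^{N_{\mathrm{odd}}}-1$) and $f \nmid Q$ (as $Q(\zeta)\neq 0$), it follows that $f \nmid (S^e)'$, hence $f^2 \nmid S^e(x)$. Consequently every irreducible factor of $x^{G(N,e)}-1$ divides $S^e(x)$ with multiplicity exactly $1$, no other irreducible factor of $x^{2N}-1$ divides $S^e(x)$, and since $x^{G(N,e)}-1$ is squarefree I conclude $\mathrm{GCD}(x^{2N}-1, S^e(x)) = x^{G(N,e)}-1$. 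Substituting this into the two formulas from Sect.\,\ref{section:Preliminaries} gives $m(x) = (x^{2N}-1)/(x^{G(N,e)}-1)$ and $L(S^e) = 2N - G(N,e)$, as claimed.
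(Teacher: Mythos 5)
Your proof is correct, and every step checks out: the identification $S^e(x)=P(x)^2+xQ(x)^2$ via Frobenius, the evaluation $T^{(2)}(\zeta)=\zeta^{-N/2}T^{(1)}(\zeta)$ from Lemma~\ref{lemma:relation between type 1 and type 2} (using that $N$ is even so $\sum_{n=0}^{N-1}\zeta^n=0$ even at $\zeta=1$), the nonvanishing of $T^{(1)}(\zeta)$ from $L(T^{(1)})=N$, and the derivative identity $(S^e)'=Q^2$ to cap multiplicities. The skeleton is the same as the paper's --- both reduce to $\mathrm{GCD}(x^{2N}-1,S^e(x))$, both start from $S^e(x)=T^{(1)}(x^2)+xL^e(T^{(2)})(x^2)$, and both feed in Lemma~\ref{lemma:relation between type 1 and type 2} and $\mathrm{GCD}(x^N-1,T^{(1)}(x))=1$ --- but your execution of the GCD computation is genuinely different. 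The paper works globally with congruences modulo $x^{2N}-1$, rewriting $S^e(x)\equiv(x^{3N-2e+1}+1)T^{(1)}(x^2)+x^{2N-2e+1}(x^{2N}-1)/(x^2-1)$ and then invoking $\mathrm{GCD}(x^{2N}-1,x^{3N-2e+1}+1)=x^{\mathrm{GCD}(2N,3N-2e+1)}-1$; the step discarding the second summand silently relies on the fact that every irreducible factor of $x^{2N}-1$ divides $(x^{2N}-1)/(x^2-1)$ with multiplicity at least $2$ while it divides the squarefree $x^{3N-2e+1}+1$ with multiplicity at most $1$. You instead work locally, root by root of the squarefree $x^{N_{\mathrm{odd}}}-1$, and handle the repeated-factor issue explicitly with $(P^2+xQ^2)'=Q^2$ and $Q(\zeta)\neq 0$. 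Your route is slightly longer but makes the multiplicity bookkeeping --- the one delicate point in the paper's argument --- fully explicit; the paper's route is shorter and yields the closed form $x^{G(N,e)}-1$ in one stroke from the exponent-gcd identity. (Your appeal to $L(T^{(2)})=N$ is redundant, since you rederive $T^{(2)}(\zeta)\neq 0$ from $T^{(2)}(\zeta)=\zeta^{-N/2}T^{(1)}(\zeta)$, but it is harmless.)
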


\begin{proof}
Since
\begin{eqnarray*}
L^{e} ( T^{(2)} ) (x)
& \equiv & x^{N - e} T^{(2)} (x) \mathrm{~mod~} x^N - 1 \\
& \equiv & x^{N - e} \left\{ x^{\frac{N}{2}} T^{(1)} (x) + \frac{x^N - 1}{x - 1} \right\} \mathrm{~mod~} x^N - 1 \\
& \equiv & x^{\frac{3 N}{2} - e} T^{(1)} (x) + x^{N - e} \cdot \frac{x^N - 1}{x - 1} \mathrm{~mod~} x^N - 1,
\end{eqnarray*}
we have
\begin{eqnarray*}
S^{e} (x) & = & T^{(1)} (x^2) + x L^{e} ( T^{(2)} ) (x^2) \\
& \equiv & T^{(1)} (x^2) + x \left\{ x^{3 N - 2 e} T^{(1)} (x^2)
+ x^{2 N - 2 e} \cdot \frac{x^{2 N} - 1}{x^2 - 1} \right\} \mathrm{~mod~} x^{2 N} - 1 \\
& \equiv & ( x^{3 N - 2 e + 1} + 1 ) T^{(1)} (x^2) + x^{2 N - 2 e + 1} \cdot \frac{x^{2 N} - 1}{x^2 - 1}
\mathrm{~mod~} x^{2 N} - 1.
\end{eqnarray*}
Note that
\begin{enumerate}
\item
Since $L( T^{(1)} ) = N$, $\mathrm{GCD} (x^N - 1, T^{(1)} (x)) = 1$.
\item
Since $3 N - 2 e + 1 \equiv 1 \mathrm{~mod~} 2$, $x^2 - 1$ does not divide $x^{3 N - 2 e + 1} - 1$.
\end{enumerate}
Hence it follows that
\begin{eqnarray*}
\mathrm{GCD} (x^{2 N} - 1, S^{e} (x))
& = & \mathrm{GCD} (x^{2 N} - 1, x^{3 N - 2 e + 1} + 1) \\
& = & x^{\mathrm{GCD} (2 N , 3 N - 2 e + 1)} - 1 \\
& = & x^{G ( N, e )} - 1.
\end{eqnarray*}
Therefore we have
\begin{equation*}
m ( x ) = \frac{x^{2N} - 1}{\mathrm{GCD} (x^{2 N} - 1, S^{e} (x))} = \frac{x^{2N} - 1}{x^{G ( N, e )} - 1},
\end{equation*}
and
\begin{eqnarray*}
L( S^{e} ) & = & 2 N - \mathrm{deg} \, \mathrm{GCD} (x^{2 N} - 1, S^{e} (x)) \\
& = & 2 N - \mathrm{deg} \, \left( x^{G ( N, e )} - 1 \right) \\
& = & 2 N - G ( N, e ).
\end{eqnarray*}
\end{proof}

\begin{corollary}
The upper and lower bounds on the linear complexity are obtained as follows:
\begin{enumerate}
\item $L( S^{e} ) \leq 2 N - 1$.
The equality holds if and only if $G ( N, e ) = 1$.
\item $L( S^{e} ) \geq 2 N - N / 2^{\nu_2 (N)}$.
The equality holds if and only if $- 2 e + 1 \equiv 0 \mathrm{~mod~} N / 2^{\nu_2 (N)}$.
\end{enumerate}
\end{corollary}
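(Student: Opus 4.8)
The plan is to derive both bounds directly from the identity $L( S^{e} ) = 2 N - G ( N, e )$ proved in the preceding theorem, so the entire task reduces to estimating the single quantity $G ( N, e )$. Writing $M := N / 2^{\nu_2 (N)}$ for the largest odd divisor of $N$ and $r := ( - 2 e + 1 ) ~\underline{\mathrm{mod}}~ M \in \{ 1, \dots, M \}$, we have $G ( N, e ) = \mathrm{GCD} ( M, r )$ by definition. Everything then follows from the elementary fact that the greatest common divisor of two positive integers $M, r$ satisfies $1 \leq \mathrm{GCD} ( M, r ) \leq M$.

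First I would handle the upper bound. Substituting the lower estimate $G ( N, e ) \geq 1$ into $L ( S^{e} ) = 2 N - G ( N, e )$ immediately yields $L ( S^{e} ) \leq 2 N - 1$, and the equality $L ( S^{e} ) = 2 N - 1$ holds exactly when $G ( N, e ) = 1$. This is the first assertion, and it requires no further work once the formula for $L ( S^{e} )$ is in hand.

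Next I would treat the lower bound. The upper estimate $G ( N, e ) \leq M = N / 2^{\nu_2 (N)}$ gives $L ( S^{e} ) = 2 N - G ( N, e ) \geq 2 N - N / 2^{\nu_2 (N)}$. The equality case $L ( S^{e} ) = 2 N - N / 2^{\nu_2 (N)}$ is equivalent to $G ( N, e ) = M$, that is, $\mathrm{GCD} ( M, r ) = M$, which holds precisely when $M \mid r$. Since $r \equiv - 2 e + 1 \mathrm{~mod~} M$, this is exactly the stated condition $- 2 e + 1 \equiv 0 \mathrm{~mod~} N / 2^{\nu_2 (N)}$.

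I do not expect any substantive obstacle here, as the corollary is essentially a direct reading of the theorem's formula together with the trivial bounds on a GCD. The only point that warrants a moment's care is that $r$ is taken in the representative range $\{ 1, \dots, M \}$ rather than $\{ 0, \dots, M - 1 \}$; this ensures that $M \mid r$ forces $r = M$ and thus matches cleanly with the divisibility condition $- 2 e + 1 \equiv 0 \mathrm{~mod~} M$, so no off-by-one discrepancy arises in the equality characterization.
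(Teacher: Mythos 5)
Your proof is correct and matches the paper's (implicit) intent: the corollary is stated without proof as an immediate consequence of the formula $L(S^e) = 2N - G(N,e)$, and your derivation via the trivial bounds $1 \leq \mathrm{GCD}(M,r) \leq M$ on $G(N,e)$, together with the observation that the representative range $\{1,\dots,M\}$ makes the equality case $G(N,e)=M$ equivalent to $-2e+1 \equiv 0 \bmod M$, is exactly the intended argument.
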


\begin{example}
Let $p = 11$ and $m = 2$, so that $N = 24$.
Let $\alpha$ be a root of the irreducible polynomial $x^2 + 7 x + 2$, and
$\omega = 9 + 2 \alpha \in \mathbb{F}_p ( \alpha )$.
Then
the linear complexity $L( S^{e} )$ of $S^{e}$ is given as
\begin{equation*}
L( S^{e} ) =
\left\{
\begin{array}{ll}
47 & \mbox{if~} e \equiv 0, 1 \mathrm{~mod~} 3 \\
45 & \mbox{if~} e \equiv 2 \mathrm{~mod~} 3.
\end{array}
\right.
\end{equation*}
Note that $N = 24 = 2^3 \cdot 3$.
\end{example}

\begin{example}
Let $p = 5$ and $m = 3$, so that $N = 62$.
Let $\alpha$ be a root of the irreducible polynomial $x^3 + 3 x^2 + 2 x + 3$, and
$\omega = 1 + \alpha + 2 \alpha^2 \in \mathbb{F}_p ( \alpha )$.
Then
the linear complexity $L( S^{e} )$ of $S^{e}$ is given as
\begin{equation*}
L( S^{e} ) =
\left\{
\begin{array}{ll}
93 & \mbox{if~} e = 16, 47 \\
123 & \mbox{otherwise}.
\end{array}
\right.
\end{equation*}
Note that $N = 62 = 2 \cdot 31$ and $- 2 \cdot 16 + 1 \equiv -2 \cdot 47 + 1 \equiv 0 \mathrm{~mod~} 31$.
\end{example}

\section{Conclusion}
\label{section:Conclusion}

In this paper, we propose interleaved sequences of geometric sequences of two types,
which were introduced by Nogami, Tada and Uehara.
Furthermore, we show correlation properties and linear complexity of the interleaved sequences.
The proposed sequences have the balance property and
double the period of the geometric sequences by the interleaved structure.
The autocorrelation distributions of the proposed sequences have peaks and troughs
at six shift values (four shift values for special cases).
The cross-correlation distributions of the two proposed sequences that have different shift widths
also have a few peaks and troughs.
The question is whether a security is affected by these distributions.
We obtain the linear complexity of the proposed sequences.
The formula induces the upper and lower bounds on the linear complexity,
and the conditions to attain the upper and lower bounds, respectively.
Therefore, although the linear complexity do not attain the maximal value,
one can choose a shift width with which a sequence has a large linear complexity,
especially such as the period minus one.
Thus the proposed sequences have good cryptographic properties for linear complexity.
We only consider the frequency for a single symbol,
however the proposed sequences do not have the balance property for block of length larger than one.
As a future work, we should give a transformation,
by which transformed sequences have the balance property for block of a certain length
and keep some good properties of the propesed sequences.

\section*{Acknowledgments}

This research was supported by JSPS KAKENHI Grant-in-Aid for Scientific Research (A) Number 16H01723.

\bibliographystyle{abbrv}
\bibliography{InterleavedNTU}

\begin{thebibliography}{10}

\bibitem{Arshad-Nogami-Ogawa-Ino-Uehara-Morelos-Zaragoza-Tsuchiya}
A.~M. Arshad, Y.~Nogami, C.~Ogawa, H.~Ino, S.~Uehara, R.~Morelos-Zaragoza, and
  K.~Tsuchiya.
\newblock A new approach for generating well balanced pseudo-random signed
  binary sequence over odd characteristic field.
\newblock In {\em 2016 International Symposium on Information Theory and its
  Applications}, pages 813--816, Monterey, 2016.

\bibitem{Chan-Games}
A.~H. Chan and R.~A. Games.
\newblock On the linear span of binary sequences obtained from $q$-ary
  m-sequences, $q$ odd.
\newblock {\em IEEE Transactions on Information Theory}, 36(3):548--552, 1990.

\bibitem{Chung-Yang}
J.-H. Chung and K.~Yang.
\newblock New frequency-hopping sequence sets with optimal average and good
  maximum {H}amming correlations.
\newblock {\em IET Communications}, 6(13):2048--2053, 2012.

\bibitem{Edemskiy}
V.~Edemskiy.
\newblock On the linear complexity of interleaved binary sequences of period $4
  p$ obtained from {H}all sequences or {L}egendre and {H}all sequences.
\newblock {\em Electronics Letters}, 50(8):604--605, 2014.

\bibitem{Golomb}
S.~W. Golomb.
\newblock {\em Shift register sequences}.
\newblock Aegean Park Press, Laguna Hills, CA, 1982.
\newblock (revised edition).

\bibitem{Gong1995}
G.~Gong.
\newblock Theory and applications of $q$-ary interleaved sequences.
\newblock {\em IEEE Transactions on Information Theory}, 41(2):400--411, 1995.

\bibitem{Gong1996}
G.~Gong.
\newblock ${Q}$-ary cascaded {GMW} sequences.
\newblock {\em IEEE Transactions on Information Theory}, 42(1):263--267, 1996.

\bibitem{Gong2002}
G.~Gong.
\newblock New designs for signal sets with low cross correlation, balance
  property, and large linear span: ${GF} (p)$ case.
\newblock {\em IEEE Transactions on Information Theory}, 48(11):2847--2867,
  2002.

\bibitem{Gordon-Mills-Welch}
B.~Gordon, W.~H. Mills, and L.~R. Welch.
\newblock Some new difference sets.
\newblock {\em Canadian Journal of Mathematics}, 14(4):614--625, 1962.

\bibitem{Goresky-Klapper}
M.~Goresky and A.~Klapper.
\newblock {\em Algebraic shift register sequences}.
\newblock Cambridge University Press, Cambridge, UK, 2012.

\bibitem{Gu-Hwang-Han-Kim-Jin}
J.~Gu, J.~H. Hwang, N.~Han, J.~M. Kim, and M.~Jin.
\newblock Analysis of an interleaved chaotic spread spectrum sequence.
\newblock In {\em 4th International Conference on Wireless Communications,
  Networking and Mobile Computing}, pages 1--6, Dalian, 2008.

\bibitem{Ino-Nogami-Uehara}
H.~Ino, Y.~Nogami, and S.~Uehara.
\newblock An application of power residue symbol for generating multi-value
  sequence over odd characteristic field.
\newblock In {\em 30th International Technical Conference on Circuits/Systems,
  Computers and Communications}, Seoul, 2015.

\bibitem{Klapper-Chan-Goresky}
A.~Klapper, A.~H. Chan, and M.~Goresky.
\newblock Cascaded {GMW} sequences.
\newblock {\em IEEE Transactions on Information Theory}, 39(1):177--183, 1993.

\bibitem{Koike-Nogami-Tsuchiya-Uehara}
S.~Koike, Y.~Nogami, K.~Tsuchiya, and S.~Uehara.
\newblock Distribution of values in a pseudo random multi-value sequence
  generated over odd characteristic field.
\newblock In {\em 38th Symposium on Information Theory and its Applications},
  Okayama, 2015.
\newblock (in Japanese).

\bibitem{Mullen-Panario}
G.~L. Mullen and D.~Panario.
\newblock {\em Handbook of finite fields}.
\newblock Discrete Mathematics and Its Applications. CRC Press, 2013.

\bibitem{Nogami-Tada-Uehara:ISITA}
Y.~Nogami, K.~Tada, and S.~Uehara.
\newblock An efficient generation of geometric sequences defined with trace and
  {L}egendre symbol.
\newblock In {\em 2014 International Symposium on Information Theory and its
  Applications}, pages 719--721, Melbourne, 2014.

\bibitem{Nogami-Tada-Uehara:IEICE}
Y.~Nogami, K.~Tada, and S.~Uehara.
\newblock A geometric sequence binarized with {L}egendre symbol over odd
  characteristic field and its properties.
\newblock {\em IEICE Trans. Fundamentals}, E97--A(12):2336--2342, 2014.

\bibitem{Nogami-Uehara-Tsuchiya-Begum-Ino-Morelos-Zaragoza}
Y.~Nogami, S.~Uehara, K.~Tsuchiya, N.~Begum, H.~Ino, and R.~Morelos-Zaragoza.
\newblock A multi-value sequence generated by power residue symbol and trace
  function over odd characteristic field.
\newblock {\em IEICE Trans. Fundamentals}, E99--A(12):2226--2237, 2016.

\bibitem{Ogawa-Arshad-Nogami-Uehara-Tsuchiya-Morelos-Zaragoza}
C.~Ogawa, A.~M. Arshad, Y.~Nogami, S.~Uehara, K.~Tsuchiya, and
  R.~Morelos-Zaragoza.
\newblock Pseudo random binary sequence generated by trace and {L}egendre
  symbol with non primitive element in $\mathbb{F}_{p^2}$.
\newblock In {\em 2016 International Symposium on Nonlinear Theory and Its
  Applications}, pages 242--244, Yugawara, 2016.

\bibitem{Scholtz-Welch}
R.~A. Scholtz and L.~R. Welch.
\newblock {GMW} sequences.
\newblock {\em IEEE Transactions on Information Theory}, 30(3):548--553, 1984.

\bibitem{Tang-Ding}
X.~Tang and C.~Ding.
\newblock New classes of balanced quaternary and almost balanced binary
  sequences with optimal autocorrelation value.
\newblock {\em IEEE Transactions on Information Theory}, 56(12):6398--6405,
  2010.

\bibitem{Tang-Gong}
X.~Tang and G.~Gong.
\newblock New constructions of binary sequences with optimal autocorrelation
  value/magnitude.
\newblock {\em IEEE Transactions on Information Theory}, 56(3):1278--1286,
  2010.

\bibitem{Yu-Gong}
N.~Y. Yu and G.~Gong.
\newblock Interleaved construction of binary sequences with optimal
  autocorrelation magnitude.
\newblock In {\em 2006 IEEE Information Theory Workshop}, pages 530--534,
  Chengdu, 2006.

\bibitem{Zeng-Zeng-Zhang-Xuan}
F.~Zeng, X.~Zeng, Z.~Zhang, and G.~Xuan.
\newblock Optimal quaternary sequences derived from optimal binary sequences
  with odd length.
\newblock In {\em 5th International Workshop on Signal Design and its
  Applications in Communications}, pages 80--83, Guilin, 2011.

\bibitem{Zierler}
N.~Zierler.
\newblock Legendre sequences.
\newblock Technical report, DTIC Document, 1958.

\end{thebibliography}

\end{document}